\newtheorem{lemma}{Lemma}
\newtheorem{theorem}{Theorem}
\def\orcid#1{\kern .08em\href{https://orcid.org/#1}{\includegraphics[keepaspectratio,width=0.7em]{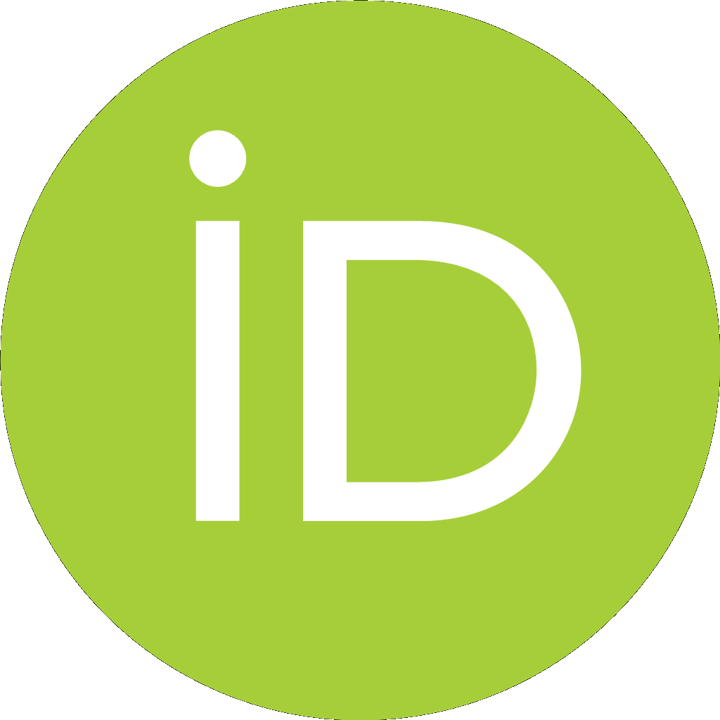}}}
\begin{document}

\title{Lightweight Targeted Estimation of Layout Noise in a Quantum Computer using Quality Indicator Circuits}

\author{Shikhar Srivastava$^{1}$, Ritajit Majumdar$^2$, Padmanabha Venkatagiri Seshadri$^3$,\\ Anupama Ray$^2$ and Yogesh Simmhan$^1$\orcid{0000-0003-4140-7774}\\~\\
\normalsize{$^1$Indian Institute of Science, Bangalore}\\
\normalsize{$^2$\emph{IBM Quantum}, IBM India Research Lab}\\
\normalsize{$^3$IBM Research, India}\\~\\
\small sshikhar@iisc.ac.in, majumdar.ritajit@ibm.com, seshapad@in.ibm.com,\\
\small anupamar@in.ibm.com, simmhan@iisc.ac.in}
\date{}

\maketitle

\begin{abstract}
In the current era of quantum computing, minimizing noise is essential for reliably executing quantum circuits on hardware. A key factor affecting circuit performance is the mapping of the abstract quantum circuit to the physical layout of the quantum hardware. This mapping can significantly influence output quality, especially since hardware noise profiles are non-uniform and dynamic. Existing solutions such as Mapomatic and Just-In-Time (JIT) Transpilation attempt to address this issue but are limited either by relying on stale calibration data or high hardware usage, respectively. In this article, we propose Quality Indicator Circuits (QICs) as a lightweight, real-time method for assessing layout quality. A QIC is a small probe circuit that designed to retain the basic structure of the user's circuit and whose ideal noiseless outcome is known. It is used to evaluate which region of the quantum hardware is best suited for executing the  circuit of interest. We first propose a basic method where a QIC is executed for each isomorphic layout to detect the best among them. Although this requires several targeted circuit executions, we show that it still, in most cases, reduces the execution overheads as compared with JIT. To reduce the overheads further, we propose the union of multiple layouts with a \textit{Union QIC} approach that has no overlaps, and \emph{Distortion Threshold} based approach allowing some overlap. Our results show that these outperform Mapomatic in the quality of layout selection while reducing the hardware overhead of JIT by $79\%$ on average. This makes our proposed method lightweight and reliable, and a viable technique for layout selection in near-term quantum devices.
\end{abstract}

\section{Introduction}
\label{sec:intro}

A general quantum-computational workflow is divided into several steps: mapping a problem of interest into quantum circuits and observables, optimizing the circuit for the target hardware, execution of the circuit on a quantum computer, and postprocessing the outcomes of the quantum computer to the desired solution~\cite{qiskit-pattern,9302814,beisel2022metamodel}. We focus on circuit optimization for a quantum hardware, often called \emph{transpilation}, which includes 
mapping the virtual qubits of the circuit to the physical qubits of the hardware (referred to as the layout) to minimize the number of SWAP gates~\cite{10.1145/2431211.2431220,10.1145/3316781.3317859}. Multiple layouts can lead to the same number of SWAP gates for the virtual circuit. In such a case, the layout with the lowest noise profile is selected.

Finding the layout with the lowest noise profile is non-trivial. In general, for isomorphic circuit layouts, each layout is scored using \textit{calibration data} available for that hardware, based on a scoring algorithm to select the one with the best score~\cite{848089,nation2023suppressing,bhoumik2025distributedschedulingquantumcircuits,murali2019noiseadaptivecompilermappingsnoisy}. 
Among these, \textit{Mapomatic}~\cite{nation2023suppressing} used in IBM Quantum hardware is lightweight, since it does not require additional circuit-specific hardware execution overheads. The isomorphic layouts are found using classical computing, and the calibration data on the IBM Quantum devices is made available. However, the noise in quantum devices is not static while the calibration data is usually updated only every 3--24~hours due to the associated overheads, and can be outdated when they are used to calculate the noise profile~\cite{10.1145/3297858.3304007}. Therefore, obtaining the \textit{current} noise profile is necessary to decide the best layout for circuit execution.

This issue  
has been addressed through \emph{Just-in-time (JIT) transpilation}~\cite{wilson2020just}
that runs a set of hardware characterization experiments to obtain updated information on the noise profile before deciding the optimal layout. This overcomes the staleness drawback of Mapomatic, but has significant hardware overheads for running the profiling experiments since it attempts to profile build a noise profile for each qubit or 2-qubit connectivity of the hardware, irrespective of the size of the user's circuit of interest. Experimental reports~\cite{ibm_qiskit_qubit_selection} show that, on average, 4~minutes of hardware time is required to obtain the updated noise profile using JIT for a 127-qubit IBM Quantum device.

In this article, we propose a novel targeted approach based on a \emph{Quality Indicator Circuit} (QIC)~\cite{defensive_publication}, which is an intermediate method between Mapomatic and JIT. Here, the noise profile for a layout is obtained by executing a QIC, which is uniquely constructed for the circuit given by the user such that: (i) it respects the structure of the user's circuit being deployed, and (ii) whose ideal outcome is either known or can be easily computed irrespective of its size. Upon execution of the QIC, the deviation of the obtained results from the ideal one indicates the quality of the layout. Unlike JIT, which evaluates the noise profile for all qubits or 2-qubit connectivity, the QIC provides an overall quality estimate of the \textit{specific layout} of the user's circuit, i.e., QIC can be considered as a targeted \emph{quality detection} mechanism for a given layout. This ensures that the hardware execution time for QIC is significantly lower than JIT, by focusing on specific layouts for the user's circuit, while still overcoming the staleness drawback of Mapomatic.
Since QIC can only detect the quality of a layout, and not directly provide the optimal layout for the circuit execution, we need to execute it on all the isomorphic layouts and then select best layout. Although this brute-force approach (\textit{Basic QIC}) often requires fewer quantum resource than JIT, it is not ideal. We propose two improved methods to reduce the number of QIC executions by uniting multiple QICs for different layouts to form a larger QIC that is executed on hardware, followed by marginalization to obtain the outcome of individual layouts. QICs that are part of the larger QIC can have no overlap among them or some overlap. In our first alternative approach, we create a single union QIC for multiple layouts with no overlaps to halve the number of circuit executions required, relative to smaller QICs (\textit{Union QIC}). However, for larger user circuits, it is difficult to obtain layouts with no overlap, and this method fails to reduce the number of circuit execution below the brute-force technique. So, a second alternative allows overlap between layouts, by maintaining a \emph{distortion threshold} (\textit{Overlap QIC}). 
Our results show that this outperforms Mapomatic in the quality of layout selection while reducing the hardware overhead of JIT by $79.7\%$ on average. 
This makes our proposed targeted method lightweight and reliable, and a viable technique for layout selection in near-term quantum devices.

The rest of the article is arranged as follows: In Sec.~\ref{sec:background}, we discuss the background necessary for the paper and contrast related literature against our work; in Sec.~\ref{sec:qic}, we discuss the construction of the Quality Indicator Circuit (QIC) for a given user circuit, and demonstrate its effectiveness to evaluate the quality of layouts; 
in Sec.~\ref{sec:union_disjoint_qic}, we propose the \textit{Union QIC} method to create a single union QIC for multiple layouts; 
in Sec.~\ref{sec:distortion}, we improve this to with the \textit{Overlap QIC} technique that allows some overlap between layouts, limited by a distortion threshold; 
in Sec.~\ref{sec:result}, we report hardware results that show that QIC is able to outperform Mapomatic in selecting a better layout, providing expectation values closer to the ideal than the layout selected by Mapomatic; and finally, in Sec.~\ref{sec:conclusion}, we offer our conclusions on our technique for layout selection in near-term quantum devices.

\section{Background and Related Work}
\label{sec:background}
 
Quantum workflows can be partitioned into four broad steps~\cite{qiskit-pattern,9302814,beisel2022metamodel} 
-- (i) \textit{Mapping} the problem into quantum circuits and operators, (ii) \textit{Optimizing} the circuits and operators for the target hardware, (iii) \textit{Executing} the circuits on the target hardware, and (iv) \textit{Post-processing} the result into the required solution.  
Step (ii), also called \textit{transpilation} comprises of different techniques to optimally map the circuit to the target hardware. 
Here, an abstract quantum circuit is mapped to the constraints of the target hardware. These constraints come from (i) the library of the basis gate set for that hardware~\cite{ lin2013optimized, niemann2016logic, lin2014paqcs}, (ii) mapping of the abstract qubits to the underlying topology of the hardware coupling map~\cite{ zou2024lightsabre, zulehner2018efficient, murali2019noise, murali2020software, sivarajah2020t}, and (iii) selecting the qubits and connectivity with low noise~\cite{ murali2019noise, murali2020software, nation2023suppressing, wilson2020just}. In this paper, we address the challenge of selecting of qubits and connectivity for the circuit layout with low noise profile.

\subsection{SWAP Gate Minimization Approaches} 
In superconducting quantum computers, limited qubit connectivity often necessitates the insertion of SWAP gates to implement CNOT operations between non-adjacent qubits, for a quantum circuit where each SWAP gate consists of 3 CNOT gates.
Minimizing the number of SWAP gates is an NP-hard problem~\cite{zhu2022complexity}.


Several approaches primarily minimize SWAP gates as part of circuit layout optimization~\cite{shaik2023optimallayoutsynthesisquantum,10247760,shaik2024optimallayoutsynthesisdeep}. 
Shaik and van de Pol~\cite{shaik2023optimallayoutsynthesisquantum} reduce SWAP overhead for optimum layout synthesis motivated by the fact that fewer SWAP gate counts will reduce the total gate counts and hence, the total error rate. But they do not take the error rates of qubits and gates into consideration. However, selecting layout with least noise is equally important since there might be many isomorphic layouts with different noise profiles. In contrast to these approaches, 
our work emphasizes the selection of layout with least noise. 
While Murali et al.~\cite{murali2019noise} try to tackle both SWAP and layout noise optimization together, the problem, in some sense, becomes even \emph{harder}. They propose two greedy strategies for layout mappings -- one for selecting the region of hardware prominent in lower CNOT and qubit error rates, and another for selecting qubit pairs having frequent CNOT gates to reduce SWAP overhead. They used individual qubit and gate error rates obtained from calibration data to evaluate the overall noise of the layouts. In our QIC approach, we evaluate the layout noise globally, in real time without worrying about individual qubit and gate error rates.

\subsection{Mapomatic Approach}
Mapomatic~\cite{nation2023suppressing} performs this layout optimization in two steps. First, the noise profile of the hardware is ignored, and the algorithm tries to minimize the number of SWAP gates during the mapping. Once a layout $l$ with minimal SWAP count is obtained, it finds a list of layouts which are isomorphic to $l$. The algorithm then \emph{scores} each of the layout as $s = 1-\Pi_{x \in \{gate, measurement\}}(1-e_x)$, where $e_x$ denotes the probability of error of $x \in \{gate, measurement\}$. By construction, the lower the score, the better is the quality of the layout. This two-step approach makes Mapomatic better compared to previous efforts to optimize for both SWAP count and noise profile, and it is currently integrated in the IBM Quantum SDK Qiskit~\cite{javadi2024quantum}.

Mapomatic extracts the noise information $e_x$ from the backend hardware calibration data, and therefore requires no additional quantum overhead for scoring the layout for each user circuit. However, the calibration data itself is updated only once every $3-24$ hours for IBM Quantum devices. Since noise in quantum devices is not static, Mapomatic can end up providing incorrect scoring because it uses outdated noise information. 
Further, we demonstrate that our QIC scoring yields better layouts than Mapomatic for small width circuit while for large width circuits, both are comparable. However, unlike Mapomatic, which incurs no additional hardware time but uses outdated error rates from calibration data, QIC scoring is done in real time but requires extra hardware time depending, on the efficiency of circuit reduction obtained via \textit{Overlap QIC} technique.

\subsection{JIT Approaches} 
JIT transpilation~\cite{wilson2020just} overcomes the drawback of stale calibration by running benchmarking experiments to obtain the updated noise information. These experiments include SPAM characterization, $T_1$ and $T_2$ characterization~\cite{krantz2019quantum, nation2021scalable}, and randomized benchmarking for single and two-qubit gates~\cite{silva2025hands, magesan2012characterizing}. While characterization of SPAM and $T_1, T_2$ are quite lightweight, randomized benchmarking requires a lot of circuit executions -- we show later that 132 circuit executions are required for a 27 qubit IBM Quantum device.  
More generally, JIT takes a fixed QPU time irrespective of the user circuit since it characterizes the noise profile for all qubits in the hardware. This has the benefit for not having to be repeated for different user circuits, but carries high overheads if not amortized for the optimization of many circuits before it too grows stale like the hardware calibration data. Therefore, while JIT improves over Mapomatic by using more recent noise information, it has a significant quantum overhead.
In contrast, our QIC scoring depends on the specific circuit of interest. Different QICs are generated and executed for user circuits with different 2-qubit gate structures, and the corresponding execution time is proportional to the number of isomorphic layouts, and reduced further for the united approaches. 
QIC is  
well-suited, and the profiling results can be reused, if user circuits with the same 2-qubit gate structure are executed frequently. For example, variational algorithms~\cite{Peruzzo_2014} execute the same circuit structure over multiple iterations. In such a scenario, using a QIC at the beginning, or at regular intervals if the iterations extends for a significant time, can be cheaper than JIT. But its results are not generalizable to any user circuit, unlike JIT. However, given its lower execution cost, it is viable to generated and execute such targeted QIC for individual user circuits, on-demand.
In summary, we propose a middle-ground of the two approaches where we execute a lightweight QIC targeted to the user provided circuit, and obtain an updated global information about the noise profile relevant only to the proposed layout. 

\section{Basic Quality Indicator Circuit (QIC) for Layout Scoring}
\label{sec:qic}

\subsection{Empirical Motivation}
\begin{figure}[t]
    \centering    
    \includegraphics[width=0.5\columnwidth]{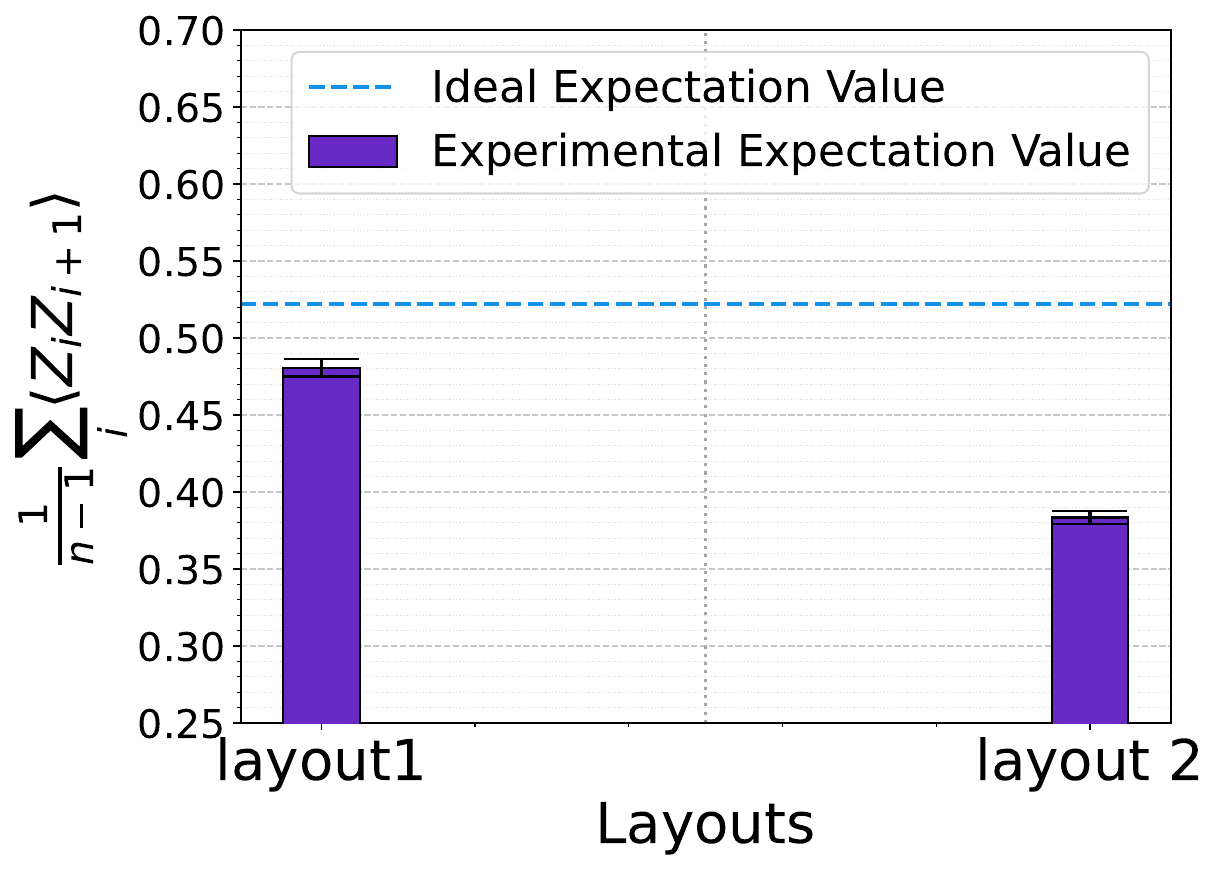}  
    \caption{Expectation values for a \textit{6-qubit QAOA circuit} executed on 2 different isomorphic layouts of a 27-qubit \textit{Fake Backend}. The expectation values differ significantly due to the differing noise profiles of the layouts even though the circuits executed on the two layouts are identical. 
    }
    
    \label{fig:why_we_need_best_layout}
\end{figure}

\begin{figure}[t]
    \centering   
    \includegraphics[width=0.5\columnwidth]{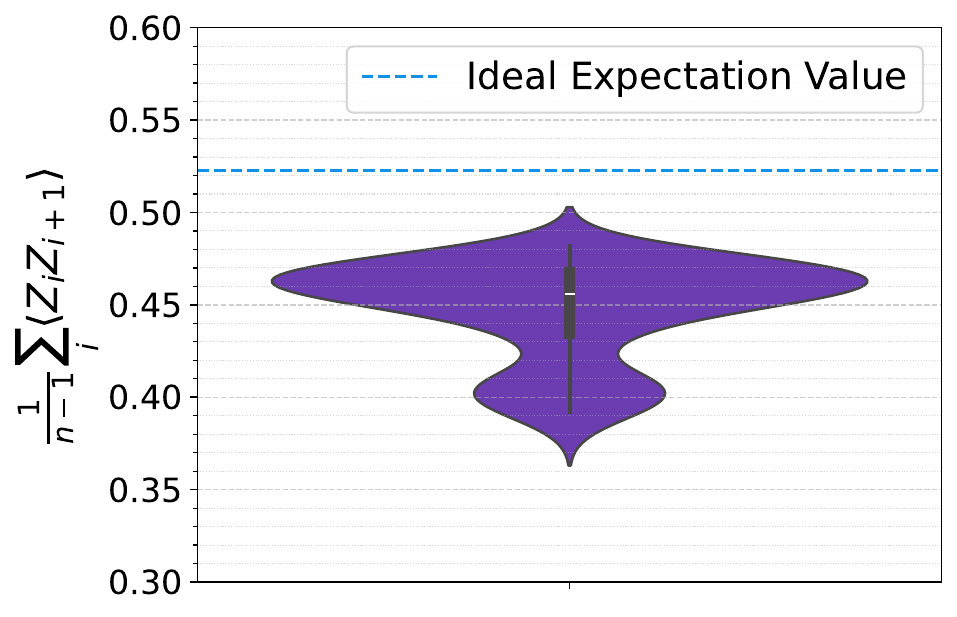}  
    \caption{Distribution of expectation values for 104 different isomorphic layouts of a \textit{6-qubit QAOA circuit} executed on a 27-qubit \textit{Fake Backend} with different noise profiles. The Q1, Q2 (median) and Q3 quartiles span a wide range of $0.435$, $0.456$ and $0.467$.
    }
    \label{fig:why_we_need_best_layout_violin}
\end{figure}

Mapping a virtual quantum circuit to the underlying hardware requires \textit{synthesis} of a functionally equivalent circuit created only using the basis gate set, which is the native gate library for a given hardware~\cite{2405.13196, camps2020approximate, amy2013meet}; \textit{mapping} the virtual circuit to the underlying hardware topology map~\cite{li2019tackling, sivarajah2020t, zou2024lightsabre}; and finding the \textit{isomorphic layout} with the least noise profile~\cite{nation2023suppressing, wilson2020just}. . 
In Fig.~\ref{fig:why_we_need_best_layout} we motivate the importance of finding a layout with low noise profile. We compute the expectation value $\frac{1}{n-1}\sum\limits_{i=1}^{n-1}\langle Z_i Z_{i+1} \rangle$ for an $n = 6$ qubit Quantum Approximate Optimization Algorithm (QAOA) circuit~\cite{farhi2014quantum}, for a path graph with randomly selected values of parameters. This is done for two different isomorphic layouts, each executed four times, using a 27-qubit Fake Backend, which mimics the coupling map and noise profile of a real IBM Quantum device~\cite{fakebackend}.  
Since the layouts are isomorphic, the circuits have the exact same structure, and hence SWAP counts. However, we observe from the plot that the expectation values obtained without any error mitigation or suppression are quite different for the two layouts, with values of $0.481$  
for layout 1 and $0.383$
for layout 2, and with the ideal expectation being $0.522$.  
The sole reason for this is that the noise of the qubits and 2-qubit connectivity in the two layouts are different. 
Similarly, in Fig.~\ref{fig:why_we_need_best_layout_violin}, we show the wide distribution of the expectation when the same circuit is executed on $104$ isomorphic layouts with different noise profile, and their deviation from the ideal expectation value, i.e., picking a lower quality layout can lead to a high penalty, with expectation values far from the ideal. These two figures show the importance of selecting layouts with low noise profile to achieve high-quality quantum computation. 
We next discuss the construction of a QIC for a given user circuit, to obtain the updated noise information with low hardware overheads.

\subsection{Construction of Quality Indicator Circuit (QIC)}

A \textit{Quality Indicator Circuit (QIC)} is defined as any circuit whose ideal noise-free outcome is known \textit{a priori} or can be calculated easily. 
A QIC is targeted, and designed for each layout of a given user circuit.
There are several methods for constructing a QIC, such as using a Clifford circuit which can be efficiently simulated, or a mirrored circuit whose outcome is known. However, creating such a QIC for any given user circuit creates a QIC whose size is the same as that of the original circuit (or even doubles it, for a mirrored circuit). We instead opt for a construction where the depth of the QIC can be reduced relative to the user's circuit, while retaining the structure of the 2-qubit gates in the original circuit. Our QIC construction ensures that: (i) its ideal outcome is known, (ii) it retains the basic template of the original circuit, and (iii) it can often be constructed with a smaller depth than the original circuit.

A QIC has a network of CNOT gates between arbitrary pairs of qubits, which is inspired by the 2-qubit gates in the original circuit, and is sandwiched between two layers of Hadamard gates acting on all qubits. Algorithm~\ref{alg:qic} explains the procedure to construct a QIC for a given circuit. It constructs the QIC by scanning all 2-qubit gates in the user's circuit $q$ and mimics similar CNOT gate structures with a possibility of minimized depth sandwiched between the two layers of Hadamard gates.
The time complexity of such a construction is $\mathcal{O}(n.d)$, where $n$ is the number of qubits and $d$ is the depth of the circuit.

\begin{algorithm}[t]
\caption{QIC construction for a given user circuit}
\label{alg:qic}
\begin{algorithmic}[1]
    \REQUIRE User's quantum circuit $q$
    \ENSURE Quality Indicator Circuit $QIC$
    \STATE $2qPerPair \gets \{~\}$
    \FORALL{2-qubit gate $g = (q_i, q_j) \in q$}
        \STATE $pair \gets$ \textsc{Sort}$(q_i, q_j)$
        \STATE $2qPerPair[pair] \gets 2qPerPair[pair] + 1$
    \ENDFOR
    \STATE $minCount \gets$     
    \textsc{Min}$(2qPerPair.values())$
    \FORALL{$pair \in 2qPerPair$}    
        \STATE $2qPerPair[pair] \gets \lceil\frac{2qPerPair[pair]}{minCount}\rceil$
        
    \ENDFOR
    \STATE Initialize $QIC \gets$ \textsc{QuantumCircuit}$(q.numQubits)$
    \STATE Apply Hadamard gate on all the qubits of \texttt{QIC}
    \FORALL{ $pair \in 2qPerPair$}
            \STATE Apply \textsc{CNOT gate} on $ (pair[0], pair[1])$ to $QIC$, $2qPerPair[pair]$ times 
    \ENDFOR
    \STATE Apply Hadamard gate on all the qubits in $QIC$
\end{algorithmic}
\end{algorithm}

\begin{lemma}
\label{lemma:qic_construction_time}
Given a circuit of depth $d$ with $n$ qubits, Algorithm~\ref{alg:qic} constructs the corresponding QIC in $\mathcal{O}(n.d)$ time.
\end{lemma}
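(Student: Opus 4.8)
The plan is to bound the running time of each phase of Algorithm~\ref{alg:qic} separately and argue that every phase costs at most $\mathcal{O}(n \cdot d)$, so that their sum is also $\mathcal{O}(n \cdot d)$.

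First I would establish the combinatorial fact on which everything rests: a circuit of depth $d$ on $n$ qubits contains at most $\mathcal{O}(n \cdot d)$ two-qubit gates. This holds because each of the $d$ layers can contain at most $\lfloor n/2 \rfloor$ two-qubit gates, since gates scheduled in the same layer act on pairwise-disjoint qubit sets. Hence the total two-qubit gate count $G$ satisfies $G \le d \lfloor n/2 \rfloor = \mathcal{O}(n \cdot d)$, and the number of distinct sorted pairs stored in \texttt{2qPerPair} is at most $G = \mathcal{O}(n \cdot d)$ (and trivially at most $\binom{n}{2}$).

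Next I would account for the individual phases. The first loop scans each two-qubit gate once; assuming $\mathcal{O}(1)$ hash-map update and noting that \textsc{Sort} on two elements is $\mathcal{O}(1)$, this phase costs $\mathcal{O}(G) = \mathcal{O}(n \cdot d)$. Computing $minCount$ with \textsc{Min} and running the normalization loop each iterate over the distinct pairs, costing $\mathcal{O}(\text{number of pairs}) = \mathcal{O}(n \cdot d)$. Initializing the circuit and applying the two Hadamard layers touch each of the $n$ qubits a constant number of times, costing $\mathcal{O}(n)$.

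The step I expect to be the main obstacle is bounding the final CNOT-application loop, whose cost is the sum $\sum_{\mathit{pair}} \texttt{2qPerPair}[\mathit{pair}]$ of \emph{normalized} multiplicities rather than merely the number of pairs. Here I would note that $minCount \ge 1$ and each stored count $c$ is a positive integer, so the ceiling obeys $\lceil c / minCount \rceil \le c$; summing over all pairs then gives $\sum_i \lceil c_i / minCount \rceil \le \sum_i c_i = G = \mathcal{O}(n \cdot d)$, since $\sum_i c_i$ is exactly the total two-qubit gate count accumulated by the first loop. Thus every phase is $\mathcal{O}(n \cdot d)$, and summing the phase costs yields the claimed $\mathcal{O}(n \cdot d)$ bound.
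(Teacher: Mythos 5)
Your proposal is correct and follows essentially the same route as the paper: bound the total number of gates in a depth-$d$, $n$-qubit circuit by $\mathcal{O}(n \cdot d)$ and charge each pass of Algorithm~\ref{alg:qic} against that bound. You are in fact somewhat more careful than the paper's own proof, which only accounts for the ``two scans'' and leaves the cost of the final CNOT-emission loop implicit, whereas you close that gap explicitly via $\lceil c / minCount \rceil \le c$ so that $\sum_i \lceil c_i / minCount \rceil \le \sum_i c_i = \mathcal{O}(n \cdot d)$.
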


\begin{proof}
    See Appendix~\ref{proof1}.
\end{proof}

In Fig.~\ref{fig:original-cir-and-qic}, we show an example of a QIC circuit construction for a 6-qubit $p = 2$ QAOA circuit for a path graph. The 2-qubit depth, i.e., the depth considering only 2-qubit gates, of this QAOA circuit is $8$, whereas that of the QIC circuit is $2$.

Next, in Lemma~\ref{lemma:qic}, we show that the ideal noise-free outcome of a QIC circuit, constructed using Algorithm~\ref{alg:qic}, is always $\ket{0}^{\otimes n}$, with $n$ being the number of qubits. 

\begin{figure}[t]
    \centering
    \begin{subfigure}[b]{0.7\textwidth}
        \centering
        \includegraphics[width=\textwidth]{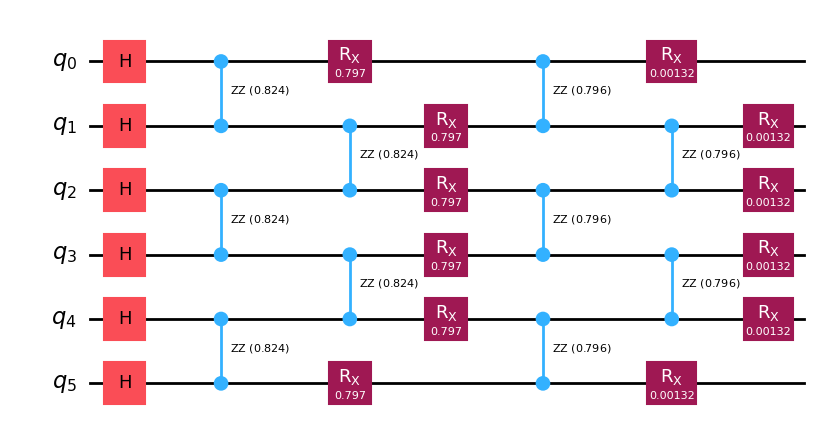}
        \caption{A 6-qubit $p=2$ QAOA \textit{user circuit} for a path graph}
        \label{fig:large-depth-original-circuit}
    \end{subfigure}
    \begin{subfigure}[b]{0.3\textwidth}
        \centering
        \includegraphics[width=\textwidth]{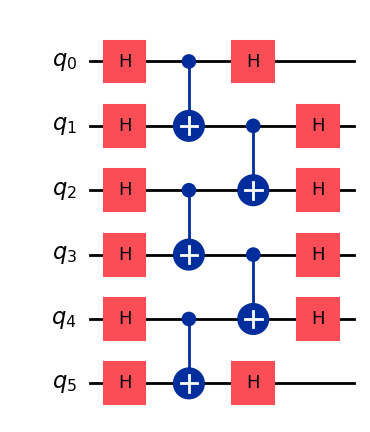}
        \caption{\textit{QIC} corresponding to the user circuit}
        \label{fig:small-depth-QIC}
    \end{subfigure}
    \caption{QIC formulation for a given user circuit. The user circuit in (a) has a 2-qubit depth of eight since each $R_{ZZ}$ gate consists of two CNOT gates. The corresponding QIC in (b) retains the 2-qubit gate structure of the user circuit but has a 2-qubit depth of two.}
    \label{fig:original-cir-and-qic}
\end{figure}

\begin{lemma}
\label{lemma:qic}
    The ideal noise-free outcome of an $n$-qubit quantum circuit with a network of CNOT gates between arbitrary pairs of qubits sandwiched between two layers of Hadamard gates acting on all qubits is $|0\rangle ^{\otimes n}$.
\end{lemma}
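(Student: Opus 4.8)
The plan is to track the $n$-qubit state through the three stages of the circuit and show that it returns to $\ket{0}^{\otimes n}$. Writing the overall unitary as $H^{\otimes n}\, U_{\mathrm{CNOT}}\, H^{\otimes n}$, where $U_{\mathrm{CNOT}}$ is the product of all the CNOT gates in the network, and applying it to the initial state $\ket{0}^{\otimes n}$, I would first evaluate the leading Hadamard layer. Since $H\ket{0}=\ket{+}$, we have
\[
H^{\otimes n}\ket{0}^{\otimes n}=\ket{+}^{\otimes n}=\frac{1}{\sqrt{2^{\,n}}}\sum_{x\in\{0,1\}^{n}}\ket{x},
\]
the uniform superposition over all computational basis states.

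The key step is to argue that $U_{\mathrm{CNOT}}$ leaves this uniform superposition invariant. I would observe that in the computational basis every CNOT gate acts as $\ket{a}\ket{b}\mapsto\ket{a}\ket{a\oplus b}$, which is a bijection (a permutation matrix) on the set $\{0,1\}^{n}$; hence any composition of CNOT gates between arbitrary pairs of qubits is again a permutation of the $2^{n}$ basis states. The uniform superposition assigns the equal amplitude $2^{-n/2}$ to every basis state, so it is fixed by any such permutation, giving $U_{\mathrm{CNOT}}\ket{+}^{\otimes n}=\ket{+}^{\otimes n}$. Equivalently, one may verify the single-pair identity $\mathrm{CNOT}\,\ket{+}\ket{+}=\ket{+}\ket{+}$ and extend it across the full network.

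Finally, I would apply the trailing Hadamard layer. Using $H\ket{+}=\ket{0}$ gives
\[
H^{\otimes n}\ket{+}^{\otimes n}=\ket{0}^{\otimes n},
\]
which is the claimed ideal noise-free outcome.

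The argument is short, and the only place demanding care is the middle step: I must ensure the permutation-invariance conclusion holds uniformly over an \emph{arbitrary} CNOT network, including repeated gates on the same pair and arbitrary choices of which qubit is control versus target. This is precisely why phrasing the middle step through ``CNOTs are computational-basis permutations and the uniform superposition is permutation-invariant'' is preferable to conjugating Hadamards through each gate individually: it sidesteps any bookkeeping of control/target directions and makes the result independent of the specific gate structure produced by Algorithm~\ref{alg:qic}.
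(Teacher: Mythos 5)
Your proof is correct and follows essentially the same route as the paper: first Hadamard layer produces the uniform superposition, the CNOT network fixes it, and the final Hadamard layer returns $\ket{0}^{\otimes n}$. The only cosmetic difference is that you justify the middle step abstractly (CNOTs are computational-basis permutations, and the uniform superposition is invariant under any basis permutation), whereas the paper makes the identical point by explicitly pairing each basis state with its image under the CNOT; the two arguments are the same in substance.
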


\begin{proof}
    See Appendix~\ref{proof2}.
\end{proof}

Since the ideal outcome of an $n$-qubit QIC is always $|0\rangle^{\otimes n}$, the fidelity of the outcome of executing the QIC on a backend with the ideal outcome is an \emph{indicator} of the layout noise profile. Therefore, we can execute the QIC $M$ times (called \textit{shots}) on a particular layout, and observe the number of times $m$, the ideal outcome, is obtained. The ratio $\frac{m}{M}$ is called the \emph{QIC score}, and indicates the noise profile of the layout. The higher the score, the better is the layout.

It is also possible to define an equivalent scoring mechanism using observables. We define any $Z$ type observable as $Z^b$, $b \in \{0,1\}^n$ where $0$ indicates $I$ and $1$ indicates $Z$. For example, $Z^{01001} = IZIIZ$. A corollary of Lemma~\ref{lemma:qic} is that the ideal expectation value of a $Z^b$ observable, for any $b$, for a QIC circuit is $+1$. Therefore, the expectation value of a $Z^b$ observable is also an equivalent QIC score. However, for this type of scoring, the choice of observable can be crucial since higher weight observables are known to be more susceptible to noise. A possible observable can be $\frac{1}{n}\sum_{i=1}^n Z_i$, i.e., the average of all weight-1 observables on $n$ qubits. Such an observable ensures that the contribution of the noise of each qubit in the layout is considered, and the effect of any particularly good or bad qubit is averaged out. Instead, one can also opt to have the same observable (with any Pauli other than $Z$ converted to $Z$, e.g., $IXIIY \rightarrow IZIIZ$) that they are going to measure for the original circuit to have a good estimate of noise for that observable. We have used metric: $\frac{1}{n-1}\sum\limits_{i=1}^{n-1}\langle Z_i Z_{i+1} \rangle$ for $n$ qubits as the \textit{QIC score} throughout the paper.

\subsection{QIC as a Noise Profile Scoring Technique}

We now establish QIC as a valid scoring technique. For this, (i) we select a static noise model, as available in Fake Backends, (ii) transpile the circuit, (iii) obtain isomorphic layouts for the transpiled circuit, and (iv) score each layout using the QIC method. In Fig.~\ref{fig:qic_scoring_method}, we show that layouts with higher QIC scores indeed lead to a better expectation value for the circuit shown in Fig.~\ref{fig:large-depth-original-circuit} for isomorphic layouts where each QIC is executed four times.

\begin{figure}[t]
    \centering
    \includegraphics[width=0.6\columnwidth]{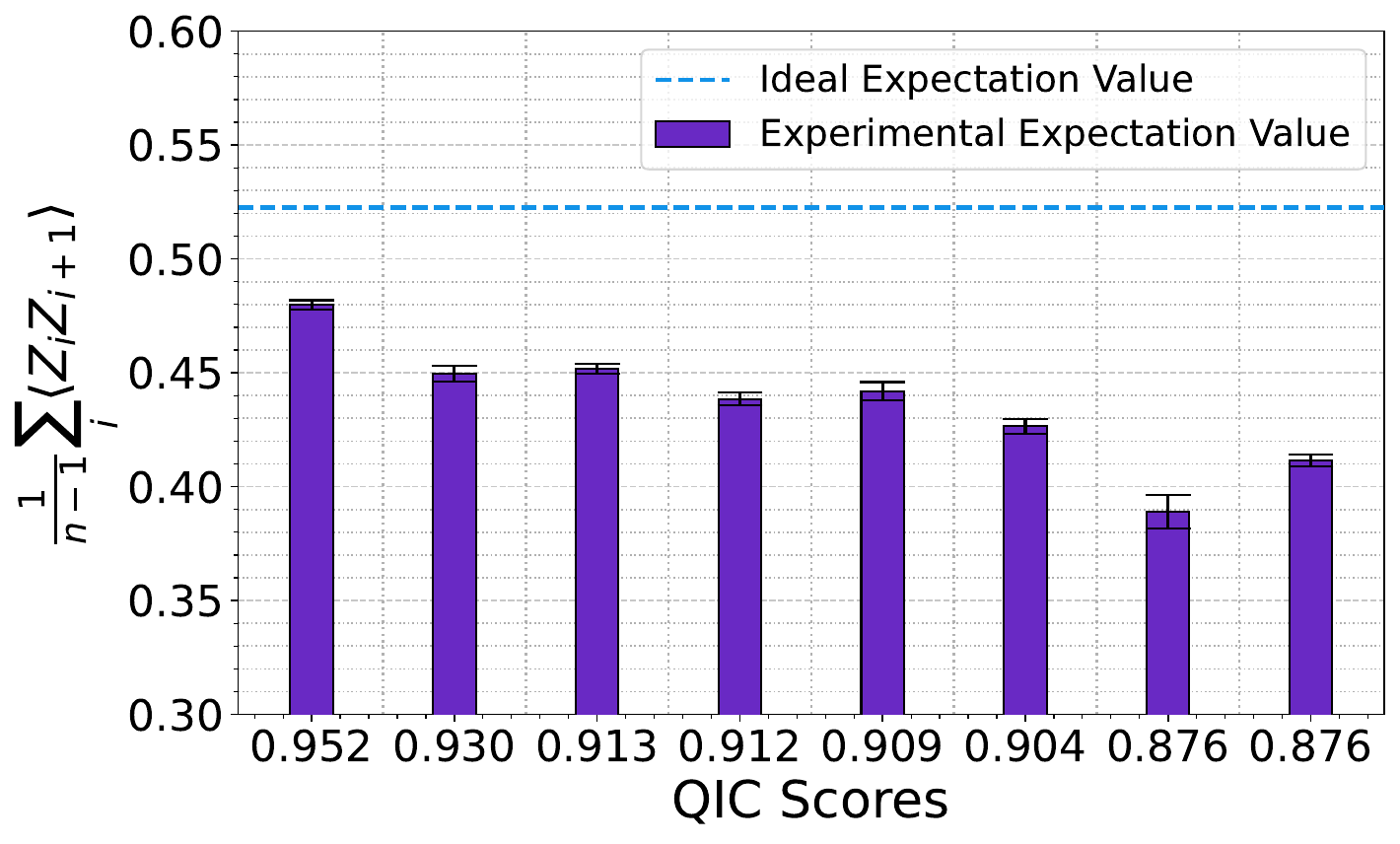}
    \caption{This shows that the expectation value of $\frac{1}{n-1} \sum_i \langle Z_i Z_{i+1} \rangle$ for the QAOA circuit in Fig.~\ref{fig:large-depth-original-circuit} obtained by executing the circuit on layouts with higher QIC score is closer to the ideal value. The quality of the computation drops for layouts with lower QIC scores. This establishes QIC as a technique to score the noise profile of layouts.}
    \label{fig:qic_scoring_method}
\end{figure}

For a static noise model, QIC and Mapomatic are expected to provide comparable scores for the isomorphic layouts. For Mapomatic, the lower the score, the better the layout, and vice versa for QIC. This is confirmed in Fig.~\ref{fig:qic_mm}, where we show that the QIC scores and the Mapomatic scores are inversely comparable for the isomorphic layouts considered in Fig.~\ref{fig:qic_scoring_method}. We see that the QIC score deteriorates with the layouts, while the Mapomatic score increases. Therefore, when there is no deviation in the system noise, Mapomatic and QIC are equivalent techniques for scoring the layouts according to their noise profile.

\begin{figure}[t]
    \centering 
    \includegraphics[width=0.6\columnwidth]{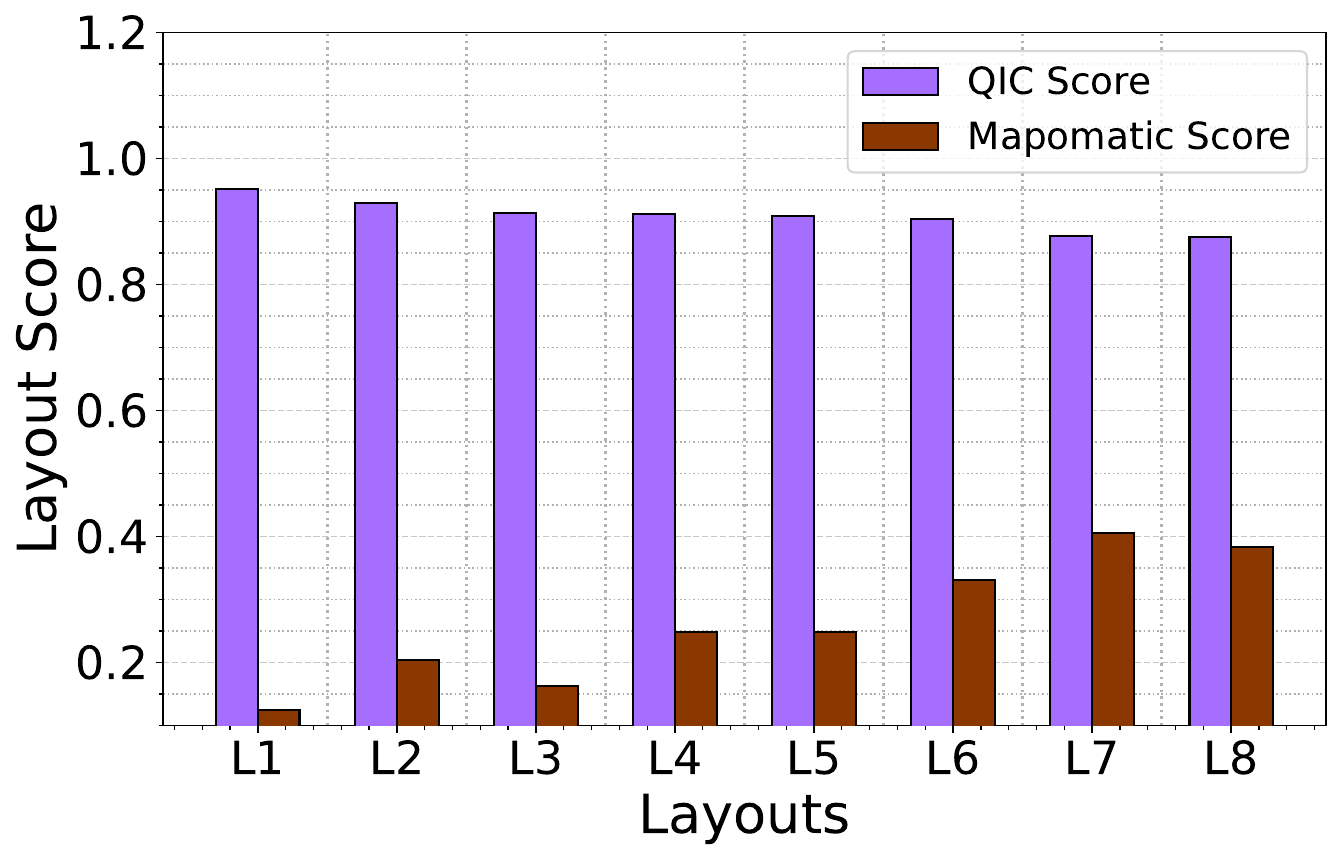}
    \caption{This figure shows the layout score of the eight layouts considered in Fig.~\ref{fig:qic_scoring_method} using both QIC and Mapomatic for a static noise model. The layouts are arranged according to increasing noise profile. We note that the QIC scores deteriorates, and Mapomatic score increases with the layouts, thus verifying that for a static noise model QIC and Mapomatic as equivalent scorers.}
    \label{fig:qic_mm}
\end{figure}

\subsection{Depth Scaling in 6-qubit QAOA Circuit}

\begin{figure*}[t]
    \centering
    \begin{subfigure}{0.45\textwidth}
        \centering
        \includegraphics[width=\linewidth]{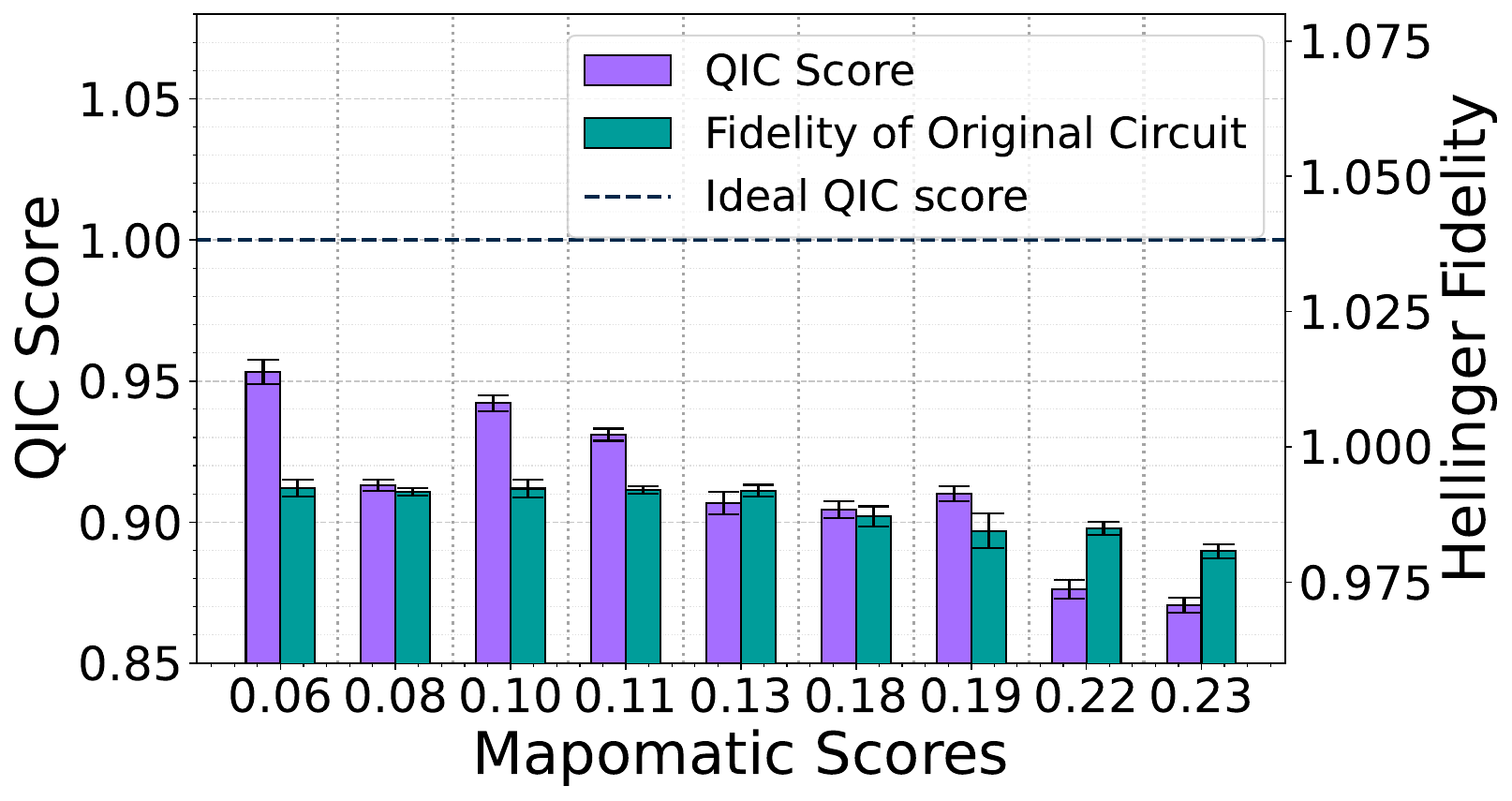}
        \caption{Number of Layers =1}
    \end{subfigure}\qquad
    \begin{subfigure}{0.45\textwidth}
        \centering
        \includegraphics[width=\linewidth]{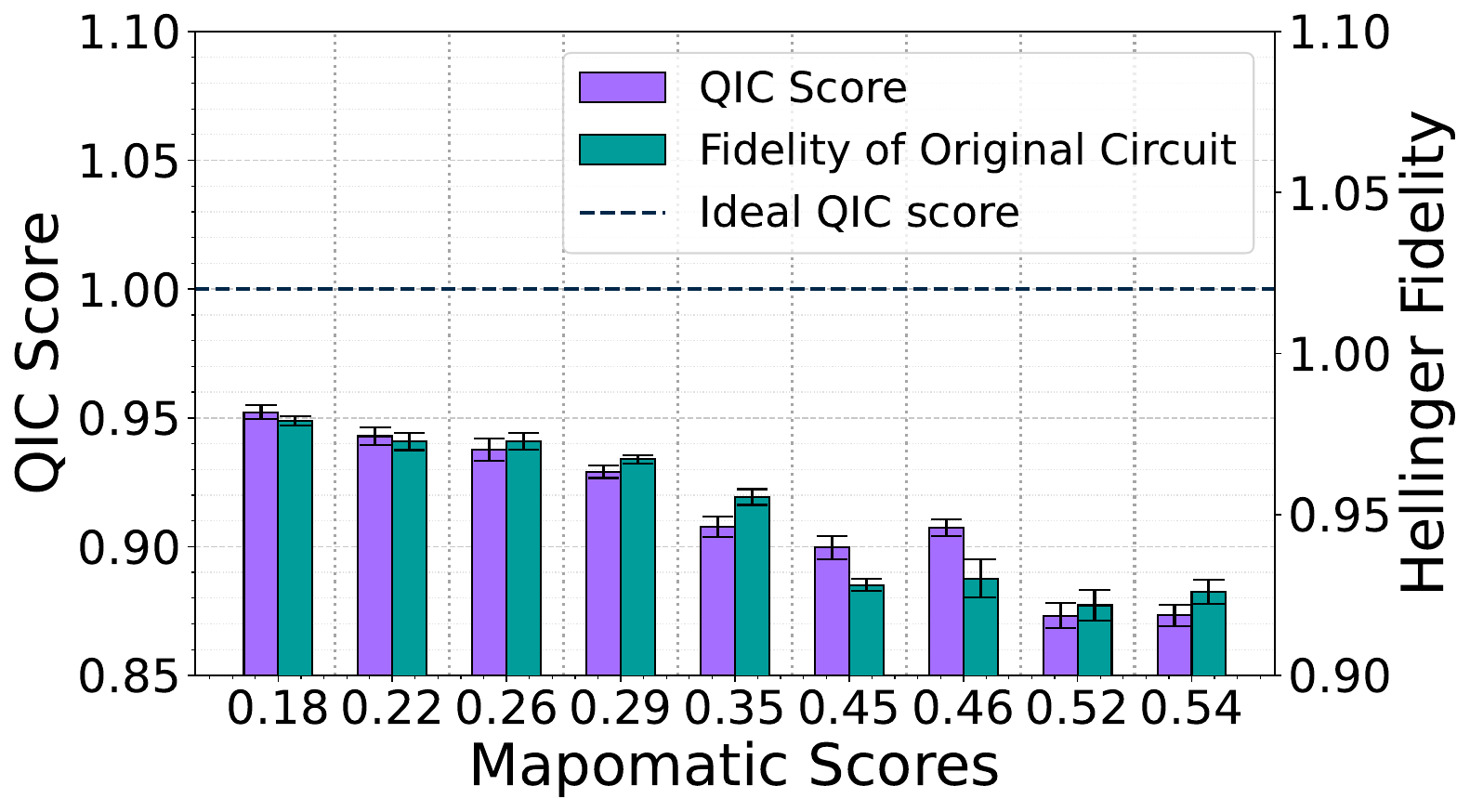}
        \caption{Number of Layers =3}
    \end{subfigure}\\
    \begin{subfigure}{0.45\textwidth}
        \centering
        \includegraphics[width=\linewidth]{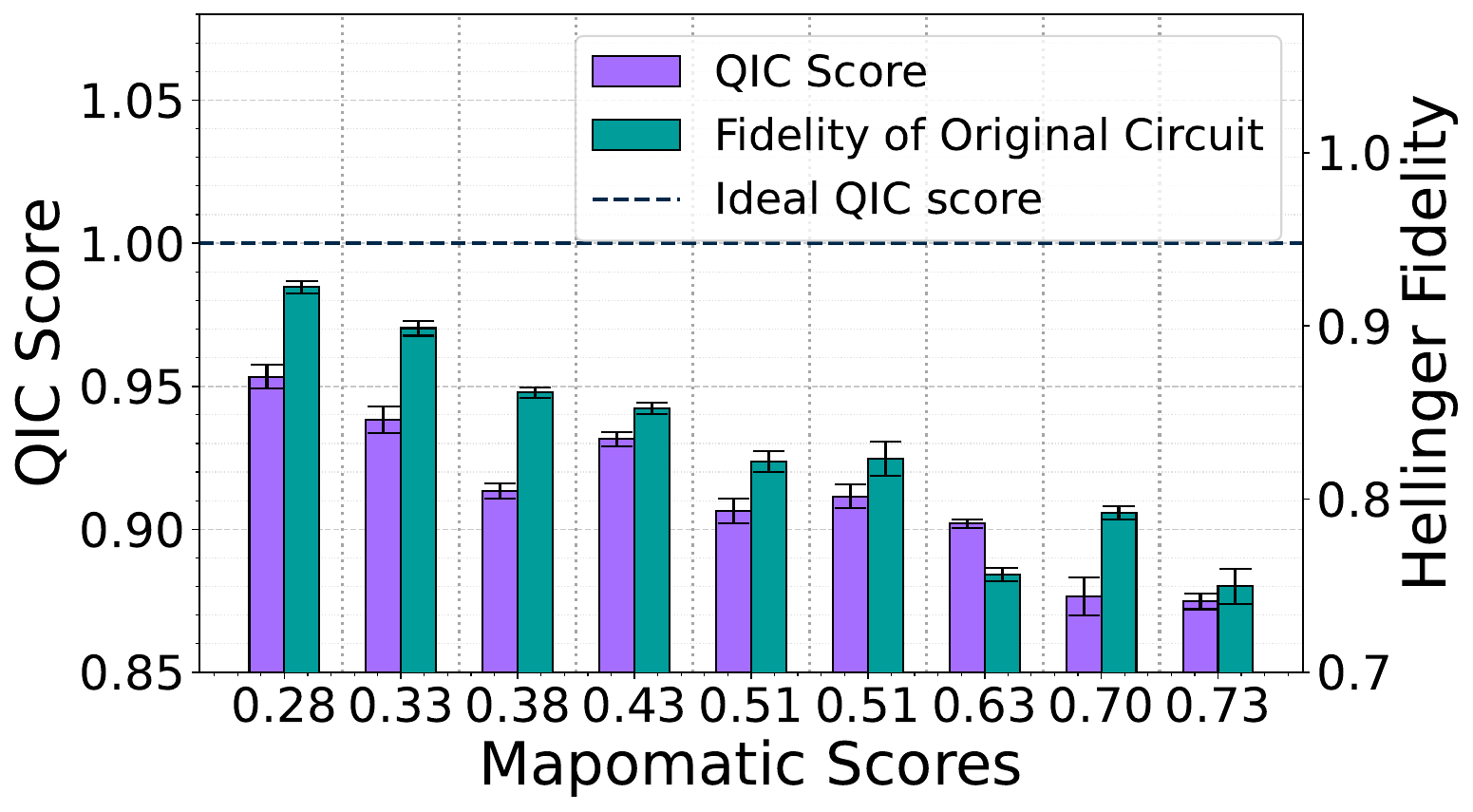}
        \caption{Number of Layers =5}
    \end{subfigure}\qquad
    \begin{subfigure}{0.45\textwidth}
        \centering
        \includegraphics[width=\linewidth]{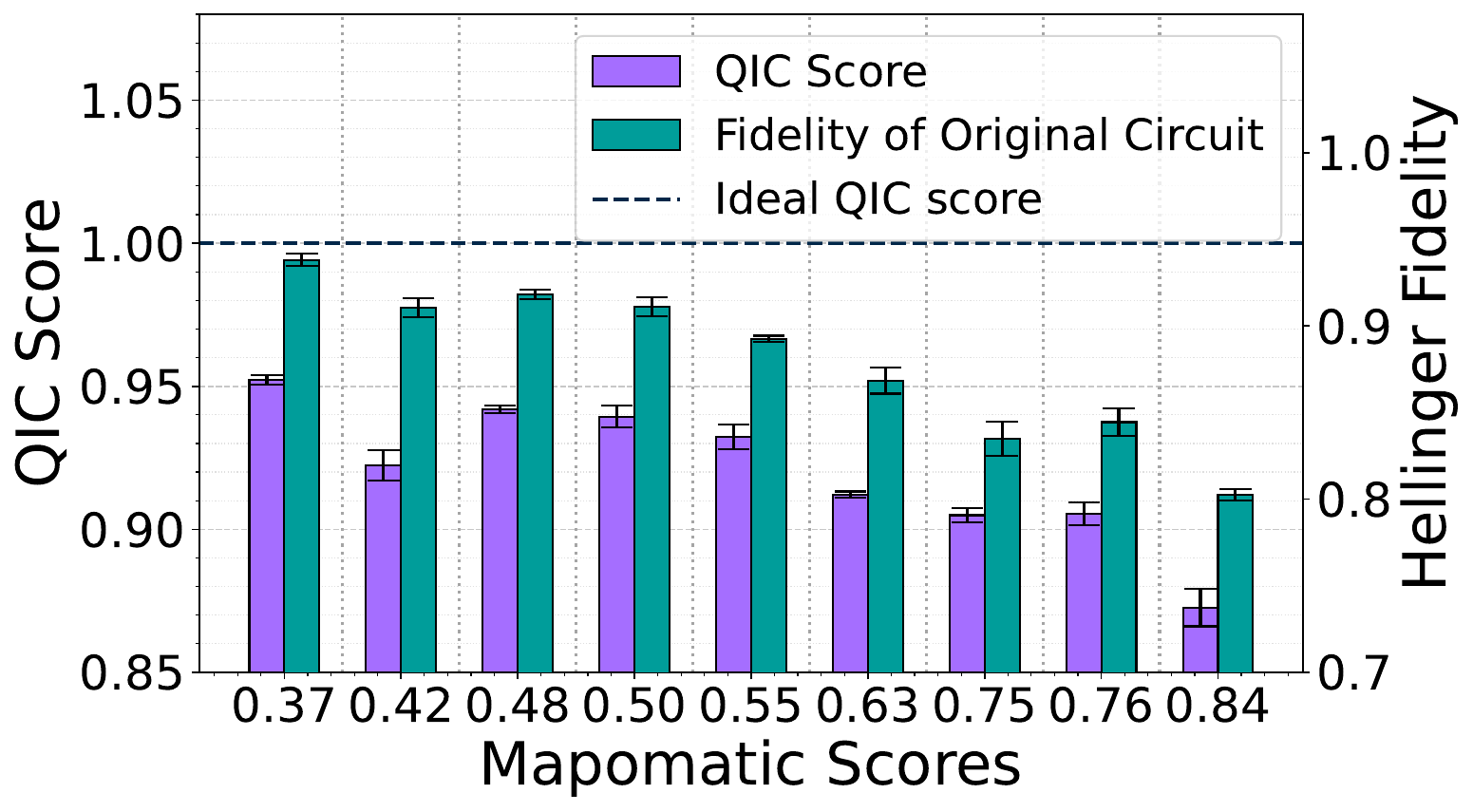}
        \caption{Number of Layers =7}
    \end{subfigure}\\
    \begin{subfigure}{0.45\textwidth}
        \centering
        \includegraphics[width=\linewidth]{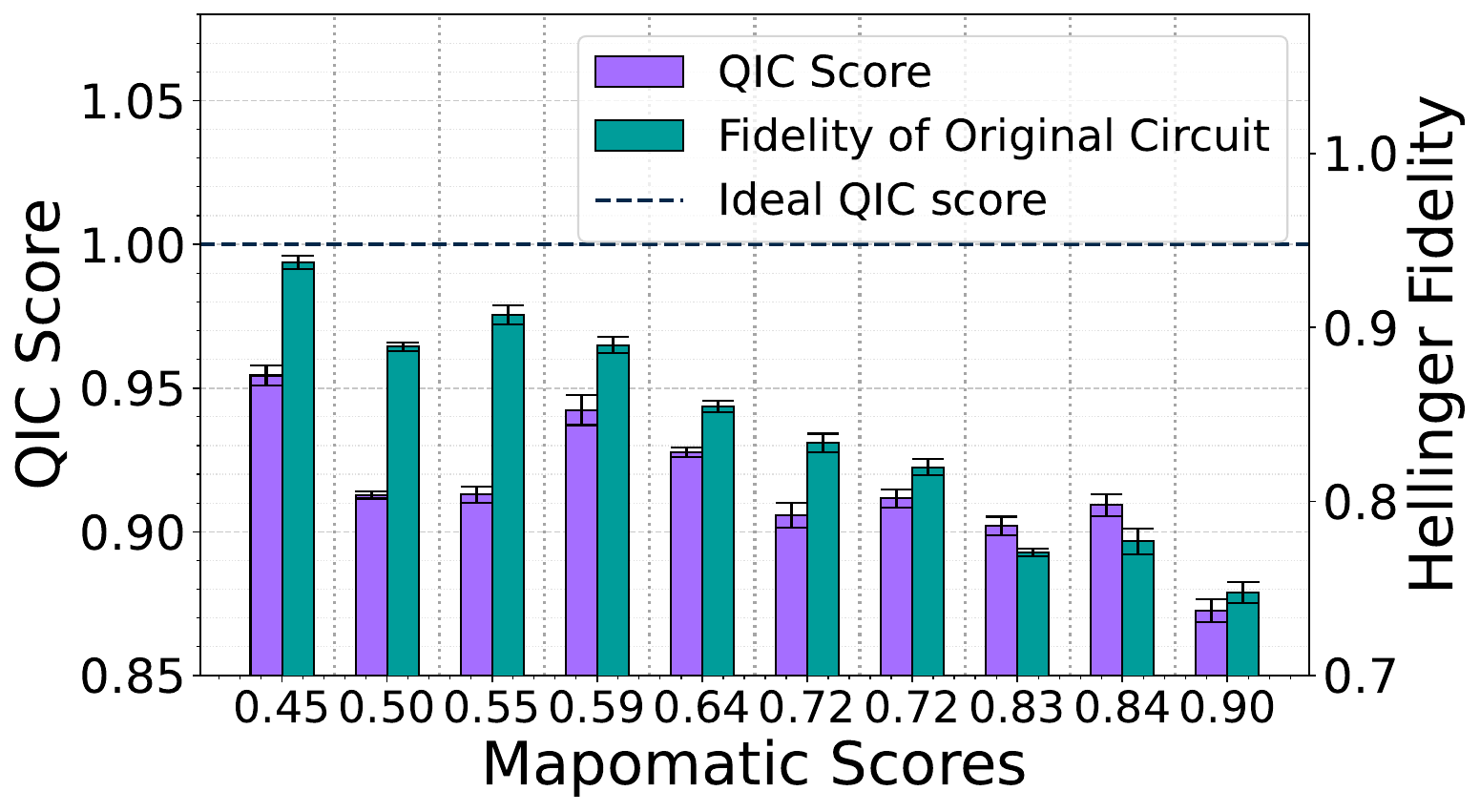}
        \caption{Number of Layers =9}
    \end{subfigure}\qquad
    \begin{subfigure}{0.45\textwidth}
        \centering
        \includegraphics[width=\linewidth]{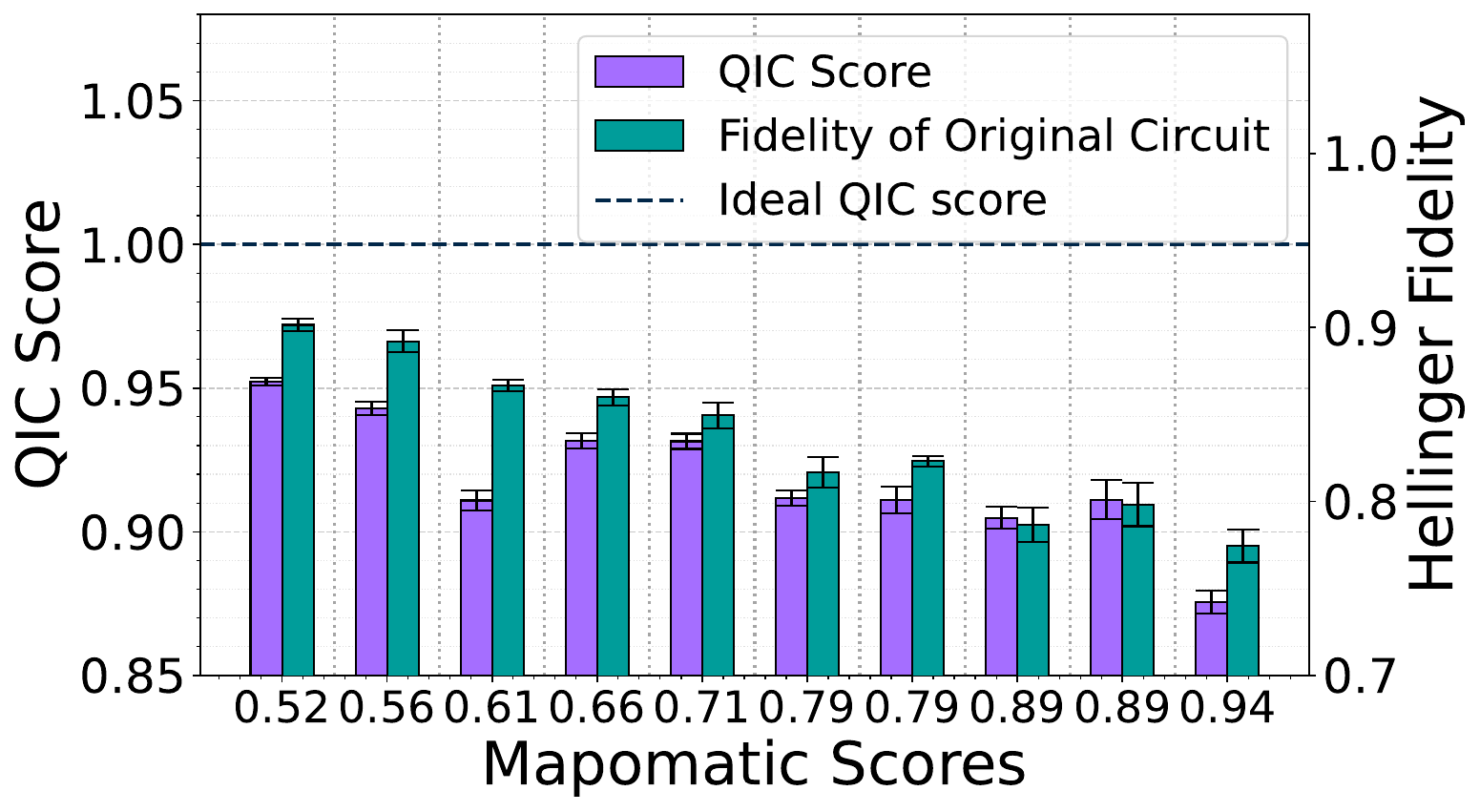}
        \caption{Number of Layers =11}
    \end{subfigure}
    
    \caption{The subfigures correspond to six different values of $p \in \{1, 3, 5, 7, 9, 11\}$ for a 6-qubit QAOA circuit. These experiments are performed on a 27-qubit Fake Backend. We observe that even when the depth of the circuit increases, the QIC score remains a valid and reliable metric to find the best layout.}
    \label{fig:depth-scaling-qaoa}
\end{figure*}

For a highly structured circuit, such as the QAOA path circuit of Fig.~\ref{fig:large-depth-original-circuit}, the same layer is repeated multiple times. Such a structure is often seen in circuits of VQE, QAOA, Hamiltonian simulation, QML, etc. For such circuits, the depth of the corresponding QIC remains the same irrespective of the number of times the layer is repeated. One can easily verify from Algorithm~\ref{alg:qic} that even if we opt for higher $p$ for the QAOA circuit of Fig.~\ref{fig:large-depth-original-circuit}, the corresponding QIC will remain the same. This is an advantage because often the noise profile of the layout scales down exponentially with the depth of the circuit, making it difficult to distinguish between the noise profile of two layouts for deep circuits. Since QIC captures only the main structure and the minimum information from the circuit, it can avoid the unnecessary exponential down-scaling of the score with depth.

In Fig.~\ref{fig:depth-scaling-qaoa}, we show that, in general, the QIC score remains a valid and reliable metric irrespective of the depth of the circuit by observing that the \textit{fidelity}~\cite{nielsen2001quantum} of the QAOA circuit increases as the QIC score increases. We have executed each circuit 4 times. We see that $0 \leq \text{fidelity} \leq 1$, where the fidelity indicates the closeness of experimental measurement statistics with ideal measurement statistics, i.e. a fidelity of $1$ means that the experimental outcome matches the ideal outcome.

\subsection{Resource Usage Comparison between QIC and JIT}
The number of experiments to be performed for JIT remains unchanged irrespective of the size of the circuit since it characterizes the entire hardware. For a 27-qubit IBM Quantum device (FakeKolkataV2), the number of circuit executions required for JIT is $132$. On the other hand, the basic QIC method needs to be executed for each isomorphic layout for a given circuit. To compare the resource requirements of JIT and QIC, we vary the number of qubits in the QAOA circuit and obtain the number of isomorphic layouts. Assuming the same number of shots for each circuit, Table~\ref{tab:shots_JIT_vs_QIC} shows the execution overheads of QIC and JIT.

\begin{table}[t]
\centering
\caption{Resource comparison between JIT, and vanilla QIC for QAOA circuits of different width on FakeKolkataV2}
\label{tab:shots_JIT_vs_QIC}
\begin{tabular}{@{}p{2.8cm}cccccp{}@{}}
\toprule
Circuit executions & \textbf{6 qubits} & \textbf{10 qubits} & \textbf{14 qubits} & \textbf{18 qubits} & \textbf{20 qubits}\\
\midrule
\textbf{JIT} & 132 & 132 & 132 & 132 & 132\\
\textbf{QIC} & 104 & 156 & 128 & 100 & 88 \\
\bottomrule
\end{tabular}
\end{table}

From the table, it is evident that while the basic QIC method, in general, has lower overhead than JIT, there can be scenarios where the this QIC approach requires more circuit executions, e.g., 10-qubit QAOA. This motivates the need to reduce the resource consumption of the basic QIC method. Next,  
we propose a method to unite multiple isomorphic layouts, and execute a single larger QIC for them, which significantly reduces the execution overheads.

\section{Union QIC Approach to Reduce Circuit Execution}
\label{sec:union_disjoint_qic}

\begin{figure}[t]
    \centering    \includegraphics[width=0.7\columnwidth]{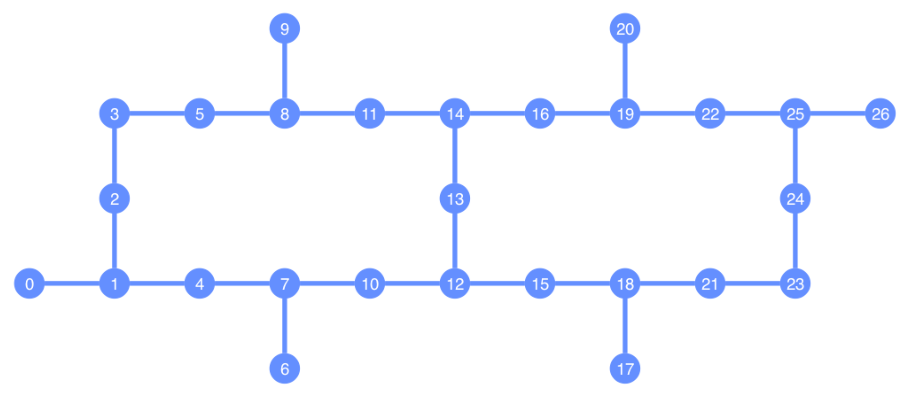}
    \caption{The coupling map of a 27-qubit IBM Quantum device}
    \label{fig:fake_kolkata}
\end{figure}

In this section we discuss techniques to reduce the circuit execution overhead for QIC. One potential way is to filter out some layouts based on the Mapomatic score. For example, if there are $m$ layouts, one can choose to ignore, say, the last $\frac{m}{2}$ layouts based on their Mapomatic score. This idea makes use of the inherent assumption that a very good Mapomatic layout cannot become very bad over time, even with stale calibration data, and vice versa. However, this heuristic needs to be validated through rigorous data generation over time from the hardware, and may not hold true across different calibration cycles. Therefore, we instead propose a method in which we create the union of multiple QICs, thus reducing the number of circuit executions.

For a given user circuit, the number of QIC executions is equal to the number of isomorphic layouts. However, na\"{i}vely, one can unite multiple disjoint layouts, execute a single QIC for them, and obtain the scores for each layout by marginalizing over the rest of the qubits~\cite{bhoumik2024resource}. For example, Fig.~\ref{fig:fake_kolkata} shows the coupling map of the 27-qubit FakeKolkataV2. Consider two disjoint isomorphic layouts $l_1 = [18, 21, 23, 24, 25, 26]$ and $l_2 = [3, 2, 1, 4, 7, 6]$ for the 6-qubit QAOA (Fig.~\ref{fig:large-depth-original-circuit}). We can create the QIC (Fig.~\ref{fig:small-depth-QIC}) individually for each of the two layouts, combine them together to form a 12-qubit \emph{union QIC}, and execute the union QIC on the hardware. The outcome for the first QIC corresponding to layout $l_1$ can be obtained by marginalizing over $l_2$ and vice versa. Therefore, instead of two circuit executions, a single execution is sufficient.

However, all the isomorphic layouts cannot be disjoint. Even for this 6-qubit scenario, there exist multiple layouts which overlap with each other (e.g., $[19, 22, 25, 24, 23, 21],\allowbreak\ [18, 21, 23, 24, 25, 22],\allowbreak\ [19, 16, 14, 13, 12, 15],\allowbreak\ [20, 19, 22, 25, 24, 23]$, etc.), and cannot be combined as the previous example. In other words, out of all the isomorphic layouts, there exist sets of mutually disjoint layouts that can be combined and executed as a single union QIC circuit. So, the total number of QIC circuit executions reduces to the same as the number of such sets.  
This requires us to find the \textit{minimal} number of such disjoint sets, and this problem 
can be stated as follows:

\textbf{Problem.~}
\label{problem}
Given a circuit $C$ with $N$ isomorphic layouts for a given hardware, find the minimum number of disjoint sets such that $S_i \cap S_j = \phi, \forall \text{~sets~} S_i \neq S_j \text{~and~} \text{~layouts~}\\ l^p, l^r \in S_i, l^p \cap l^r = \phi , \forall i$. 
\noindent In other words, we want to partition the layouts into the minimum number of sets such that the layouts in each set are disjoint. Then we can execute a single QIC for each disjoint set.

\begin{theorem}
\label{thm:chromatic}
    Finding the minimum number of disjoint sets $S = \{S_1, S_2, ..., S_k\}$  
    is NP-Hard.
\end{theorem}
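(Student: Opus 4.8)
The plan is to recognize the problem as a graph-coloring problem in disguise and then reduce a known NP-hard coloring problem to it. First I would build the \emph{overlap graph} (or conflict graph) $H$ whose vertices are the $N$ isomorphic layouts, with an edge between two layouts exactly when they share a physical qubit, i.e. when they are not disjoint. Under this encoding, a family of mutually disjoint layouts is precisely an independent set of $H$, so a partition of the layouts into disjoint sets is precisely a proper vertex coloring of $H$, and the minimum number of disjoint sets equals the chromatic number $\chi(H)$. Hence the problem is exactly that of computing $\chi(H)$, which justifies the name of the theorem; the remaining work is a reduction establishing that these overlap graphs are rich enough to make coloring hard.

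For the reduction I would take the cleanest realizable route, fixing the circuit $C$ to be a single two-qubit gate (interaction graph $K_2$) and letting the hardware coupling map be an arbitrary graph $G$, which will carry the hardness. With this choice, every isomorphic layout of $C$ in $G$ is just a placement of the gate onto an adjacent pair of qubits, so the layouts are in bijection with the edges of $G$ (the two qubit orderings of an edge collapse to the same vertex set, which is all that disjointness depends on). Two layouts are disjoint exactly when the corresponding edges are vertex-disjoint, so a disjoint set of layouts is a matching and a minimum partition into disjoint sets is a minimum partition of $E(G)$ into matchings, i.e. the chromatic index $\chi'(G)$. I would then invoke Holyer's theorem, that deciding whether a cubic graph satisfies $\chi'(G)=3$ (as opposed to $4$, the only other value possible by Vizing's theorem) is NP-complete; since our minimization problem solves exactly this on the instance $G$, it is NP-hard.

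An alternative, which I would keep in reserve for the case where the problem is meant to range over richer circuits, is to reduce directly from vertex $k$-colorability. Given an arbitrary graph, I would realize it as the overlap graph of equal-size sets using the edge-incidence construction: introduce one token per edge, let each vertex's set collect the tokens of its incident edges (so two sets intersect iff the vertices are adjacent), and then pad every set with fresh private tokens up to a common cardinality, which creates no new intersections. The minimum disjoint partition of this family then equals $\chi$ of the original graph, and NP-hardness follows from Karp's result on $k$-coloring.

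The step I expect to be the main obstacle is \emph{realizability}: I must guarantee that the sets produced by the reduction are genuine isomorphic layouts of an actual circuit on an actual coupling map, and that they are exactly the family of $N$ layouts over which the problem quantifies, rather than an arbitrary abstract set system. The single-gate / edge-coloring construction is attractive precisely because it sidesteps most of this difficulty, pushing all the hardness into the (freely chosen) coupling graph $G$ while keeping the circuit trivial; the only points requiring care there are the cosmetic collapse of ordered qubit maps to unordered vertex sets and, in the general construction, the equal-cardinality padding, neither of which affects the disjointness relation that drives the coloring equivalence.
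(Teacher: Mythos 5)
Your proposal is correct, and it actually contains two proofs, one of which is the paper's. Your ``reserve'' reduction --- one token per edge of an arbitrary graph, each vertex's set collecting the tokens of its incident edges, then padding every set with fresh private tokens to a common cardinality --- is essentially word-for-word the paper's own argument: the paper reduces from chromatic number by creating one layout per node, assigning a shared physical qubit to the two endpoint layouts of each edge, padding with unrepeated qubits to equalize sizes, and assuming all-to-all connectivity so that the layouts qualify as isomorphic. Your primary route is genuinely different: you fix the circuit to a single two-qubit gate, push all hardness into the coupling graph $G$, identify disjoint families of layouts with matchings, and invoke Holyer's NP-completeness of deciding $\chi'(G)=3$ for cubic graphs. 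What that route buys is precisely the realizability you flag as the main obstacle: your layouts are literally the placements of an actual circuit on an actual coupling map, whereas the paper's construction (and your reserve construction) produces an abstract set system that is \emph{not} the complete list of isomorphic layouts of any circuit --- under all-to-all connectivity, every ordered $n$-tuple of physical qubits is an isomorphic layout, so the true exhaustive list would be far larger than the constructed family. Both the paper's proof and your reserve proof are therefore valid only under the reading, implicit in the paper, that the family of layouts is supplied as input rather than generated exhaustively from the circuit and hardware.

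One caveat on your primary route: the orientation collapse is not purely cosmetic if the layout list is required to be exhaustive. If both qubit orderings of each edge appear as distinct layouts, the conflict graph is the line graph of $G$ blown up by $K_2$, and the minimum number of sets becomes the $2$-fold chromatic index $\chi'_2(G)$ rather than $\chi'(G)$. That quantity does not distinguish class~1 from class~2 cubic graphs: the Petersen graph has $\chi'=4$, yet its six perfect matchings cover every edge exactly twice, giving $\chi'_2=6$ --- the same value as for any $3$-edge-colorable cubic graph. So you must either deduplicate orientations, which is legitimate exactly when the layout list is part of the input (the same license the paper takes), or repair the reduction; your reserve construction is immune to this because you control the set system directly.
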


\begin{proof}
    See Appendix~\ref{proof3}.
\end{proof}

Since the problem is NP-Hard, we design a \textit{greedy heuristic} to solve it. A layout $l$ can be assigned to a set $S$ only if $l \cap l^s = \phi$ $\forall$ $l^s \in S$. If a layout cannot be assigned to any existing set, a new set is created and $l$ is assigned to it. For every isomorphic layout $l$, Algorithm~\ref{alg:disjointness} checks whether it is compatible (disjoint) with every existing layout in a given set. Algorithm~\ref{alg:greedy} starts with one empty set containing the first layout and uses Algorithm~\ref{alg:disjointness} to assign all $N - 1$ isomorphic layouts one by one to some existing compatible set, or if not possible, creates a new empty set and assign $l$ to it. In this greedy approach, a new set is created only if the layout can not be assigned to any existing sets.

\begin{algorithm}[t]
    \caption{Algorithm for \textsc{IsLayoutDisjointWithSet}}
    \label{alg:disjointness}
    \begin{algorithmic}[1]
        \REQUIRE A set of mutually disjoint layouts $S$, a different layout $l$
        \ENSURE \textbf{True} if $l \cap l^s = \phi$ $\forall$ $l^s \in S$, else \textbf{False}
        \STATE $compatible \leftarrow \textbf{True}$
        \FORALL{$l^s \in S$}
            \STATE $overlap \leftarrow 0$
            \FORALL{$l_i \in l$}
                \IF{$l_i \in l^s$}
                    \STATE $overlap = overlap +1$
                \ENDIF
            \ENDFOR
            \IF{$overlap > 0$}
                \STATE $compatible \leftarrow \textbf{False}$
                \STATE \textbf{break}
            \ENDIF
        \ENDFOR
        \RETURN $compatible$
    \end{algorithmic}
\end{algorithm}

\begin{algorithm}[t]
\caption{Construct the minimal number of sets containing disjoint layouts.}
\label{alg:greedy}
\begin{algorithmic}[1]
\REQUIRE A list $L$ of isomorphic layouts for a given circuit and hardware
\ENSURE A list $U = \{S_1, S_2, ..., S_k\}$ where each $S_i, 1 \leq i \leq k$ contains disjoint layouts
\STATE $U \leftarrow [~]$
\STATE $L_R \leftarrow$ a random permutation of $L$
\STATE Add set $\{L_R[0]\}$ to $U$
\FORALL{layout $l \in L_R[1:]$}
    \STATE $placed \gets \textbf{False}$
    \FORALL{set $S_i \in U$}
        \STATE $compatible \gets$ \textsc{IsLayoutDisjointWithSet}$(S_i, l)$
        \IF{$compatible$} 
            \STATE $S_i = S_i \cup l$
            \STATE $placed \gets \textbf{True}$
            \STATE \textbf{break}
        \ENDIF
    \ENDFOR
    \IF{not $placed$}
        \STATE Append new set $\{l_C\}$ to $U$
    \ENDIF
\ENDFOR 
\RETURN $U$
\end{algorithmic}
\end{algorithm}

\begin{lemma}
\label{alg2}
    Algorithm~\ref{alg:disjointness} checks the compatibility of a layout $l$ with a set of mutually disjoint layouts $S = \{l^1, l^2, \hdots, l^m\}$ with $\mathcal{O}(n^2.m)$ time complexity, where $n$ is the number of qubits in each layout.
\end{lemma}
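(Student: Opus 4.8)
The plan is to read off the running time directly from the nested loop structure of Algorithm~\ref{alg:disjointness}, being careful about the hidden cost of the membership test. First I would fix notation: the input set $S$ contains $m$ mutually disjoint layouts $l^1, \ldots, l^m$, each layout is a collection of $n$ physical qubits, and $l$ is the candidate layout, also of size $n$. The outer \textbf{for all} loop iterates over the elements $l^s \in S$, so it executes at most $m$ times. The inner \textbf{for all} loop iterates over the qubits $l_i \in l$, so it executes exactly $n$ times per outer iteration. Everything outside the conditional membership test---the initializations, the counter increments, the comparison \texttt{overlap > 0}, the boolean assignment, and the \textbf{break}---is $\mathcal{O}(1)$ work.

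The key step is to account for the cost of the line \texttt{if} $l_i \in l^s$. Since each layout $l^s$ stores $n$ qubits and is treated as an unordered collection, deciding membership $l_i \in l^s$ requires, in the worst case, a linear scan through all $n$ qubits of $l^s$, i.e.\ $\mathcal{O}(n)$ time. I would state this representation assumption explicitly, because it is exactly the point on which the bound turns: under a list/array representation the test is $\mathcal{O}(n)$, which is what yields the $n^2$ factor. (If one instead preprocessed each $l^s$ into a hash set the test would be $\mathcal{O}(1)$ and the bound would improve to $\mathcal{O}(n\,m)$; I would note this as a remark but prove the stated $\mathcal{O}(n^2 \cdot m)$ bound for the literal pseudocode.)

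Combining the two counts gives the result by straightforward multiplication: each inner-loop iteration costs $\mathcal{O}(n)$ for the membership test, the inner loop runs $n$ times, so one full pass of the outer loop costs $\mathcal{O}(n^2)$; with at most $m$ outer iterations the total is $\mathcal{O}(n^2 \cdot m)$. I would also observe that the early \textbf{break} upon detecting an overlap can only terminate the loops sooner, so it never increases the asymptotic cost and the worst case (no early exit, e.g.\ when $l$ is disjoint from every $l^s$ and the scan runs to completion) realizes the $\mathcal{O}(n^2 \cdot m)$ bound.

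The analysis is essentially a routine loop count, so there is no deep obstacle; the only genuinely substantive point---and the one I would be careful to flag---is justifying the $\mathcal{O}(n)$ cost of the set-membership operation from the assumed data representation, since an unstated change in that assumption would change the exponent on $n$. Everything else follows by multiplying loop bounds by per-iteration costs.
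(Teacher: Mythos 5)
Your proof is correct and follows essentially the same route as the paper's: the paper likewise bounds the cost as $\mathcal{O}(m)$ overlap computations, each taking $\mathcal{O}(n^2)$ time because every qubit of $l$ is compared against all $n$ qubits of $l^s$. Your added care about the data-representation assumption behind the $\mathcal{O}(n)$ membership test (and the remark that hashing would improve the bound) is a useful clarification, but it does not change the argument, which the paper carries out identically.
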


\begin{proof}
    See Appendix~\ref{proof4}.
\end{proof}

\begin{lemma}
\label{alg3}
    Algorithm~\ref{alg:greedy} partitions the list of all isomorphic layouts into a minimal number of sets $U = \{S_1, S_2, \hdots, S_k\}$ with $\mathcal{O}(M^3n^2)$ time complexity, where $M$ is the total number of isomorphic layouts and $n$ is the number of qubits in each layout.
\end{lemma}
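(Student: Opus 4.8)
The plan is to treat this as primarily a running-time analysis, since the companion Lemma~\ref{alg2} already fixes the cost of the inner compatibility test and Theorem~\ref{thm:chromatic} tells us the optimal partition is NP-Hard; consequently the word ``minimal'' here should be read as the greedy invariant (a new set is opened only when forced) rather than a claim of global optimality. First I would isolate the loop structure of Algorithm~\ref{alg:greedy}: forming the random permutation $L_R$ and seeding $U$ with $\{L_R[0]\}$ costs $\mathcal{O}(M)$, and the outer \textbf{for} loop then runs over the remaining $M-1$ layouts.

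For the dominant cost I would bound a single outer iteration. When processing a layout $l$, the inner loop scans the sets $S_1,\dots,S_j$ currently in $U$, where $j \le M$, and for each set it invokes \textsc{IsLayoutDisjointWithSet}, which by Lemma~\ref{alg2} costs $\mathcal{O}(n^2\,|S_i|)$. Bounding each set size crudely by $|S_i| \le M$ gives $\mathcal{O}(n^2 M)$ per call, and with at most $M$ sets scanned before $l$ is either placed or a new set is appended in $\mathcal{O}(1)$, one outer iteration costs $\mathcal{O}(M \cdot n^2 M) = \mathcal{O}(M^2 n^2)$. Multiplying by the $M-1$ outer iterations yields the claimed $\mathcal{O}(M^3 n^2)$, with the permutation and initialization absorbed as lower-order terms.

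I would then separately confirm the structural claim that $U$ is a genuine partition: every layout is placed, because if no compatible set exists a new one is created, so no layout is ever discarded; the union of the sets is exactly $L$; and each $S_i$ remains internally disjoint, since a layout is inserted only after \textsc{IsLayoutDisjointWithSet} returns \textbf{True} against every member already present (correctness of that test being inherited from Lemma~\ref{alg2}). The ``new set only when forced'' rule is what justifies calling the output \emph{heuristically} minimal, while consistency with Theorem~\ref{thm:chromatic} forbids asserting true optimality.

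The main obstacle I anticipate is the gap between the stated bound and the tightest honest one. A sharper accounting observes that within a single outer iteration the set sizes satisfy $\sum_i |S_i| \le M$ (it is just the number of already-placed layouts), so the per-iteration cost is really $\mathcal{O}(n^2 \sum_i |S_i|) = \mathcal{O}(n^2 M)$, giving an overall $\mathcal{O}(M^2 n^2)$. I would therefore make explicit that the $\mathcal{O}(M^3 n^2)$ figure comes from the loose worst-case product of three $M$-factors (outer loop $\times$ number of sets $\times$ per-set cost) and is a correct upper bound, while flagging that it is not tight; this keeps the argument sound without overstating what the greedy schedule guarantees.
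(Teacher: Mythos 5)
Your core accounting is exactly the paper's: $\mathcal{O}(M)$ outer iterations, at most $\mathcal{O}(M)$ sets scanned per iteration, and each call to \textsc{IsLayoutDisjointWithSet} bounded via Lemma~\ref{alg2} by $\mathcal{O}(Mn^2)$, giving the three $M$-factors in $\mathcal{O}(M^3n^2)$. So on the claimed bound you and the paper agree, and both arguments are valid upper bounds.

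Where you go beyond the paper is in two respects, both correct and both worth keeping. First, your amortized refinement is right: within one outer iteration the sets partition the already-placed layouts, so $\sum_i |S_i| \le M$ and the worst-case cost of scanning \emph{all} sets is $\mathcal{O}(n^2 M)$, not $\mathcal{O}(n^2 M^2)$; the algorithm therefore actually runs in $\mathcal{O}(M^2 n^2)$, and the paper's $\mathcal{O}(M^3 n^2)$ is loose by a factor of $M$ (still a true big-$O$ statement, which is why the lemma as stated survives). The paper's proof misses this because it bounds the number of sets and the size of each set by $M$ independently, ignoring that their sum, not their product, is what the iteration pays for. Second, your reading of ``minimal'' as the greedy invariant (a new set is opened only when no existing set is compatible) rather than global optimality is the only defensible one: by Theorem~\ref{thm:chromatic} the true minimization is NP-Hard, so the greedy output cannot be certified optimal, and the paper's proof simply asserts minimality without comment. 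Your explicit verification that the output is a genuine partition with internally disjoint sets is likewise absent from the paper's proof and costs nothing to include.
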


\begin{proof}
    See Appendix ~\ref{proof5}.
\end{proof}

We obtain the union list $U = \{S_1, S_2, ..., S_k\}$ from Algorithm~\ref{alg:greedy}. In this union list, each set $S_i$ consists of mutually disjoint layouts. Therefore, a single QIC can be constructed by combining the individual QICs corresponding to each layout. This single QIC is executed on the quantum computer, and the output of each individual QIC is obtained by marginalizing over the others.

Algorithm~\ref{alg:greedy} always selects the first layout from the input list of isomorphic layouts $L$ as the first element of the first set. Therefore, the ordering of the input list $L$ can affect the performance of the algorithm. In fact, we observe some deviation in the number of sets created by Algorithm~\ref{alg:greedy} for different orderings of the input list. This deviation is minimal, though, and often by just 1. However, the deviation becomes more prominent when we allow some overlap between the layouts in a set, discussed in Sec.~\ref{sec:distortion}. Therefore, we opt for a random permutation of the input list of isomorphic layouts $L$, and run Algorithm~\ref{alg:greedy} in parallel for various permutations. The parallel execution can be performed using \emph{Qiskit Serverless}~\cite{gambetta2024quantum} or in any other HPC environment.

\begin{table}[t]
\centering
\caption{Resource comparison between JIT, basic QIC, and Union QIC method for QAOA circuits of different widths on FakeKolkataV2}
\label{tab:QIC_vs_union_QIC}
\begin{tabular}{lccccc}
\toprule
\bf Circuit executions & \textbf{6 qubits} & \textbf{10 qubits} & \textbf{14 qubits} & \textbf{18 qubits} & \textbf{20 qubits}\\
\midrule
\textbf{JIT} & 132 & 132 & 132 & 132 & 132\\
\textbf{Basic QIC} & 104 & 156 & 128 & 100 & 88 \\
\textbf{Union QIC} & 54 & 120 & 128 & 100 & 88\\
\bottomrule
\end{tabular}
\end{table}

Table~\ref{tab:QIC_vs_union_QIC} shows the number of circuit executions required when union QIC is created for disjoint layouts. We note that while it reduces the number of circuit executions over basic QIC for user circuits with fewer number of qubits, there is no savings seen for larger circuits. This is expected since the bigger the circuit, the more difficult it is to find non-overlapping isomorphic layouts. In the following section we relax this requirement to allow some overlaps between layouts in an attempt to reduce the number of circuit executions even further.

\section{Overlap QIC with Distortion Thresholds}
\label{sec:distortion}

In Union QIC,  
not many isomorphic layouts for larger circuits conform to the requirement of disjointness. When this is possible, the reduction in the number of QIC executions was $2\times$, implying that each set $S_i$ in the union list $U$ contained only two disjoint layouts on average. In Fig.~\ref{fig:ceil_avg}, we show three QICs for three \emph{overlapping} isomorphic layouts. Algorithm~\ref{alg:greedy} will place these layouts in three different sets to maintain disjointness between the layouts inside every set.

\begin{figure*}[t]
    \centering
    \begin{subfigure}[b]{0.7\textwidth}
        \includegraphics[width=\textwidth]{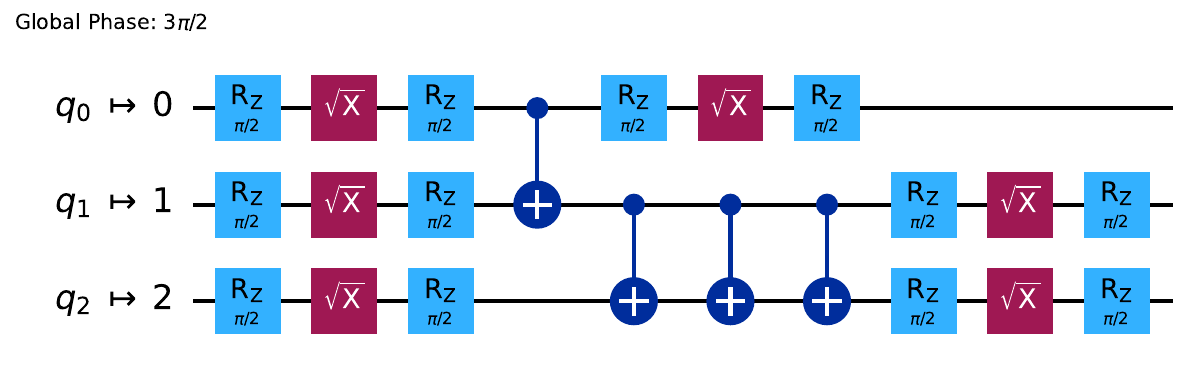}
        \caption{Layout = [0,1,2]}
        \label{fig:ceil_avg_a}
    \end{subfigure}\\
    \begin{subfigure}[b]{0.7\textwidth}
        \includegraphics[width=\textwidth]{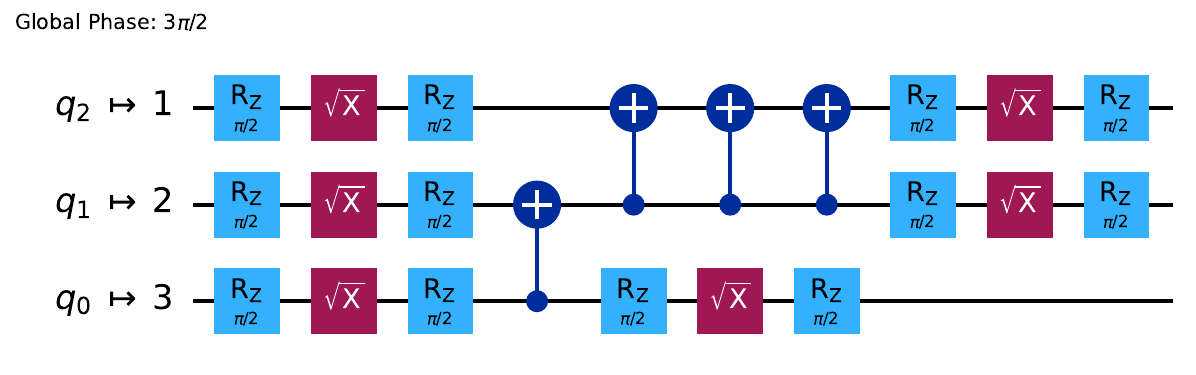}
        \caption{Layout = [3,2,1]}
        \label{fig:ceil_avg_b}
    \end{subfigure}\\
    \begin{subfigure}[b]{0.7\textwidth}
        \includegraphics[width=\textwidth]{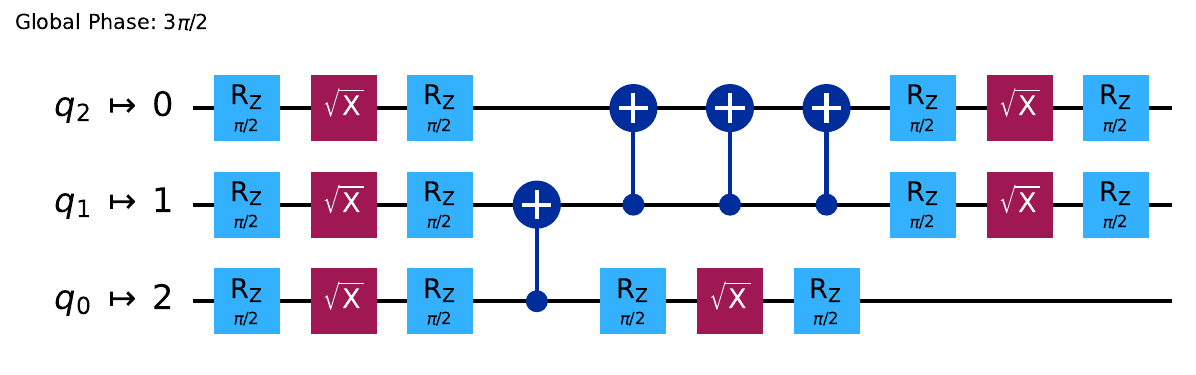}
        \caption{Layout = [2,1,0]}
        \label{fig:ceil_avg_c}
    \end{subfigure}
    \caption{Three example QICs mapped on three disjoint isomorphic layouts of a 27-qubit IBM FakeBackend}
    \label{fig:ceil_avg}
\end{figure*}

However, we can reduce the number of circuit executions to just 1 instead of 3 if we allow overlap between the layouts in each set $S_i$ where we, still, can construct a union QIC for these three layout by considering the average number of CNOT gates for each qubit pair. For example, the qubit pair $(0,1)$ contains 1 CNOT gate in the first QIC, and 3 CNOT gates in the third. Hence, the union QIC will contain $(3 + 1)/2 = 2$ CNOT gates in the qubit pair $(0,1)$. On the other hand, qubit pair $(1,2)$ contains 3 CNOT gates in the first and second QIC, and only one CNOT gate in the third QIC. Therefore, the union QIC will contain $\lceil (3 + 3 + 1)/3 \rceil = 3$ CNOT gates. Recall from Sec.~\ref{sec:qic} that the directionality of the 2-qubit gates is not considered during QIC construction. Fig.~\ref{fig:union_qic_ceil_avg} shows the union QIC for the three QICs of Fig.~\ref{fig:ceil_avg} when relaxing the disjointness constraint and permitting overlaps. 

\begin{figure}[t]
    \centering
    \includegraphics[width=0.7\columnwidth]{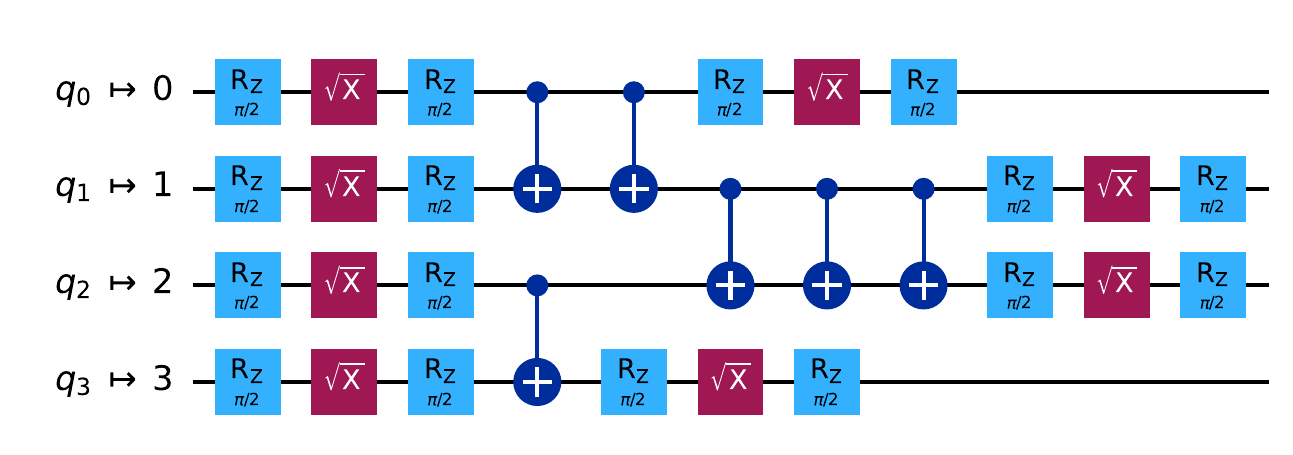}
    \caption{Union QIC allowing layout overlaps, for the QICs showed in Fig.~\ref{fig:ceil_avg}. The number of CNOT gates between each qubit pair is obtained as the average of the CNOT Gates for that qubit pair over all the three layouts.}
    \label{fig:union_qic_ceil_avg}
\end{figure}

Naturally, allowing overlaps changes the CNOT structure from the original QICs to the union QIC. A significant change in the structure will result in incorrect evaluation of the quality of layout when the union QIC is executed and the outcome of the original QIC is obtained via marginalization. We quantify this deviation in the gate structure from the original QIC to the union QIC using the \emph{distortion} metric.

We define \emph{distortion} of a physical qubit as the increase or decrease in the number of 2-qubit gates associated with that physical qubit in the union QIC compared to the original QIC. The \textit{total distortion} of a layout is defined as the sum of the distortion of individual physical qubits in that layout. For example, the distortion of qubit $0$ for layout 1 in Fig.~\ref{fig:ceil_avg_a} is $1$, while its overall distortion is $1 + 1 + 1 = 3$. Taking the sum of individual distortions gives an upper bound on the increase in the 2-qubit depth of the circuit, but captures the deviation faced by each qubit individually.

When constructing union QIC with overlaps (Overlap QIC), we define a \emph{distortion threshold} that limits the acceptable distortion for each layout in order to prevent a significant deviation in the structure of union QIC from the individual ones. We denote that the layout $l^i$ is compatible with the host union QIC within a distortion threshold $T$ as $l^i \cap_T UnionQIC = \phi$. 

\begin{algorithm}[t]
\caption{Construct union QIC with distortion, \textsc{UnionWithDistortion}}
\label{alg:construct_distortion}
\begin{algorithmic}[1]
    \REQUIRE A set of layouts $S$ and the individual QIC $Q_l$ corresponding to each $l \in S$
    \ENSURE Union QIC $U_Q$ over all the individual QICs
    \STATE $N = \cup_l\text{num\_qubits}(Q_l)$ \tcc{the total number of qubits in the union QIC is the number of qubits in the union of the physical qubits involved in individual QICs}
    \STATE Construct an empty quantum circuit $U_Q$ with $N$ qubits
    \STATE Apply Hadamard gate on all the qubits of $U_Q$
    \STATE $P = \cup_{Q_l} \text{qubit pairs in }Q_l$ \tcc{union of all qubit pairs in the individual QICs}
    \FORALL{qubit pair $(q_i, q_j) \in P$}
        \STATE $n_l \gets$ number of 2-qubit gates on $(q_i, q_j) \in Q_l$
        \STATE $n_{avg} \gets \lceil \frac{\sum_l n_l}{N_{\hat{l}}} \rceil$ where $N_{\hat{l}}$ is the number of layouts whose QICs have $n_l > 0$.
        \STATE Apply $n_{avg}$ CNOT gates on qubit pair $(q_i, q_j)$ of $U_Q$
    \ENDFOR
    \STATE Apply Hadamard gate on all the qubits of $U_Q$
    \RETURN $U_Q$
\end{algorithmic}
\end{algorithm}

\begin{algorithm}[t]
\caption{Algorithm for \textsc{IsDistortionDisjoint} to find if Layout is Distortion Disjoint with Set}
\label{alg:distortion_compatibility_check}
\begin{algorithmic}[1]
    \REQUIRE A set of layouts $S$ where each layout are compatible with each other as per distortion threshold; individual QIC $Q_{l^S}$ for each layouts $l^S \in S$; a different layout $l$; QIC $Q_l$ corresponding to layout $l$; distortion threshold $T$
    \ENSURE \textbf{True} if $l' \cap_T Union QIC = \phi ,\forall l' \in S \cup l$, else \textbf{False}
    \STATE $compatible \gets \textbf{True}$
    \STATE $U_Q =$ \textsc{UnionWithDistortion}$(S \cup l$, $\{Q_{l'}\}$ for $l' \in S \cup l)$
    \FORALL{layout $l' \in S \cup l$}
        \STATE $totalDistortion_{l'} \gets 0$
        \FORALL{qubit $q \in l'$}
            \STATE $count_{q_{l'}} \gets$ number of 2-qubit gates associated with qubit $q \in Q_{l'}$
            \STATE $count_{q_{U_Q}} \gets$ number of 2-qubit gates associated with qubit $q \in U_Q$
            \STATE $totalDistortion_{l'} = totalDistortion_{l'} + | count_{q_{l'}} - count_{q_{U_Q}}|$
        \ENDFOR
        \IF{$totalDistortion_{l'} > T$}
            \STATE $compatible \gets \textbf{False}$
            \STATE \textbf{break}
        \ENDIF
    \ENDFOR
    \RETURN $compatible$
\end{algorithmic}
\end{algorithm}

\begin{algorithm}[t]
\caption{Construct the minimal number of sets with overlapping layouts maintaining distortion threshold}
\label{alg: greedy_algorithm_to_Reduce_qic_Execs_distortion}
\begin{algorithmic}[1]
    \REQUIRE A list $L$ of isomorphic layouts for a given circuit and hardware; the individual QIC $Q_l$ for each layout $l \in L$; distortion threshold $T$
    \ENSURE A list $U = \{S_1, S_2, \hdots, S_k\}$ where each set $S_i$, $1 \leq i \leq k$ contains layouts which are either disjoint or overlap maintaining the distortion threshold $T$
    \STATE $U \gets [~]$
    \STATE $L_R \gets$ random permutation of $L$
    \STATE Add set $\{L_R[0]\}$ to $U$
    \FORALL{layout $l$ in $L_R[1:]$}
    \STATE $placed \gets \textbf{False}$
    \FORALL{set $S_i \in U$}
        \STATE $compatible \gets$ \textsc{IsDisjoint}($S_i$, $\{Q_{l^s}\}$ $\forall$ $l^s \in S_i$, $l$, $Q_l$, $T$)
        \IF{$compatible$} 
            \STATE $S_i = S_i \cup l$
            \STATE $placed \gets \textbf{True}$
            \STATE \textbf{break}
        \ENDIF
    \ENDFOR
    \IF{not $placed$}
        \STATE Append new set $\{l_C\}$ to $U$
    \ENDIF
    \ENDFOR 
    \RETURN $U$
\end{algorithmic}
\end{algorithm}

In order to construct the minimal number of sets conforming to the distortion threshold, first it is necessary to construct the union QIC for a set of layouts. Algorithm~\ref{alg:construct_distortion}, first computes the required number of CNOT gates in the union QIC by taking a ceil of the average of number of CNOT gates across different relevant layouts corresponding to a physical qubit pair, where relevant layouts are those containing at least 1 CNOT gate corresponding to the physical qubit pair. This ceil average operation is done only for the relevant layouts since only the presence of CNOT gates on the corresponding physical qubit pair can introduce ambiguity in the CNOT count of the union QIC.
Then, algorithm~\ref{alg:construct_distortion}
constructs the union QIC. After that, one can calculate the distortion faced by each layout in the union QIC, and hence determine whether all layouts within this union satisfies the provided distortion threshold or not. Algorithm~\ref{alg:distortion_compatibility_check} computes the total distortion faced by every layout  
and checks whether any of them exceed the given distortion threshold. If no layout exceeds the distortion threshold, then all the layouts are compatible with the union QIC.    

The core Algorithm~\ref{alg: greedy_algorithm_to_Reduce_qic_Execs_distortion} takes a list of isomorphic layouts and their corresponding QICs as input and computes the sets of layouts such that the layouts in each set may overlap, but maintaining the distortion threshold provided. For minimizing the number of sets, it follows the same greedy approach of Algorithm~\ref{alg:greedy} with one difference: instead of calling Algorithm~\ref{alg:disjointness} for checking compatibility, it instead calls Algorithm~\ref{alg:distortion_compatibility_check}.

\begin{lemma}
\label{alg6}
Algorithm~\ref{alg: greedy_algorithm_to_Reduce_qic_Execs_distortion} partitions the list of all $M$ isomorphic layouts into the minimal number of sets $U = \{S_1, S_2, \hdots, S_k\}$ with a time complexity of $\mathcal{O}(M^4n^2d^2)$ such that the layouts in each set may overlap but maintaining the distortion threshold provided, where $n$ is the number of qubits in a layout, and $d$ is the depth of a single QIC.
\end{lemma}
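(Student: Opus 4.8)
The plan is to analyze Algorithm~\ref{alg: greedy_algorithm_to_Reduce_qic_Execs_distortion} bottom-up through its nested subroutines, reusing the skeleton established in Lemma~\ref{alg3}. The crucial observation is that Algorithm~\ref{alg: greedy_algorithm_to_Reduce_qic_Execs_distortion} is structurally identical to the greedy Algorithm~\ref{alg:greedy}, differing only in that the disjointness test \textsc{IsLayoutDisjointWithSet} (Algorithm~\ref{alg:disjointness}) is replaced by the distortion test \textsc{IsDistortionDisjoint} (Algorithm~\ref{alg:distortion_compatibility_check}), which itself invokes the union constructor \textsc{UnionWithDistortion} (Algorithm~\ref{alg:construct_distortion}). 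So I would first bound the cost of Algorithm~\ref{alg:construct_distortion}, then of Algorithm~\ref{alg:distortion_compatibility_check}, and finally fold these into the two nested greedy loops. Throughout I would use that the number of sets ever created is at most $M$ (in the worst case each layout forms its own singleton), that any set $S$ has $|S| = \mathcal{O}(M)$, that a single QIC on $n$ qubits of depth $d$ contains $\mathcal{O}(nd)$ two-qubit gates and hence at most $\mathcal{O}(nd)$ distinct qubit pairs, and that any single pair carries at most $d$ CNOTs.

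For Algorithm~\ref{alg:construct_distortion}, the union QIC spans $N = |\cup_l \mathrm{qubits}(Q_l)| = \mathcal{O}(Mn)$ qubits, so the two Hadamard layers cost $\mathcal{O}(Mn)$. The pair set $P$ is the union of the pairs of the $\mathcal{O}(M)$ constituent QICs, hence $|P| = \mathcal{O}(Mnd)$, and building it costs $\mathcal{O}(Mnd)$. For each pair, averaging $n_l$ over the relevant layouts costs $\mathcal{O}(M)$ with per-QIC pair-count tables, and emitting the $n_{avg} \le d$ CNOTs costs $\mathcal{O}(d)$, so the main loop is $\mathcal{O}(Mnd\,(M+d)) = \mathcal{O}(M^2nd + Mnd^2)$, which bounds the whole subroutine. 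As a by-product, the resulting $U_Q$ carries at most $|P|\cdot d = \mathcal{O}(Mnd^2)$ two-qubit gates --- a bound I would flag now, since it drives the dominant term.

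For Algorithm~\ref{alg:distortion_compatibility_check}, it calls Algorithm~\ref{alg:construct_distortion} once and then, for each of the $\mathcal{O}(M)$ layouts $l' \in S\cup l$ and each of its $n$ qubits $q$, computes $count_{q_{l'}}$ and $count_{q_{U_Q}}$. The former scans $Q_{l'}$ in $\mathcal{O}(nd)$; the latter must scan $U_Q$, whose gate count is $\mathcal{O}(Mnd^2)$ from the previous step. Summing the per-qubit work gives $\mathcal{O}(M)\cdot\mathcal{O}(n)\cdot\mathcal{O}(Mnd^2) = \mathcal{O}(M^2n^2d^2)$, which dominates the $\mathcal{O}(M^2nd+Mnd^2)$ construction cost, so Algorithm~\ref{alg:distortion_compatibility_check} runs in $\mathcal{O}(M^2n^2d^2)$. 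Finally, Algorithm~\ref{alg: greedy_algorithm_to_Reduce_qic_Execs_distortion} iterates over $M-1$ layouts, and for each tries at most $\mathcal{O}(M)$ existing sets, calling Algorithm~\ref{alg:distortion_compatibility_check} each time; this yields $\mathcal{O}(M)\cdot\mathcal{O}(M)\cdot\mathcal{O}(M^2n^2d^2) = \mathcal{O}(M^4n^2d^2)$, as claimed.

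For the validity of the partition I would argue, by induction on insertions exactly as in Lemma~\ref{alg3}, that every layout is placed in exactly one set and a new set is opened only when no existing set accepts it. The one point needing extra care is that adding $l$ to $S$ mutates $U_Q$ and could in principle push a previously admitted member over the threshold; this is handled because Algorithm~\ref{alg:distortion_compatibility_check} recomputes the distortion of \emph{every} $l' \in S\cup l$ against the updated union, so the invariant $totalDistortion_{l'}\le T$ holds for all members after each insertion, and ``minimal'' is meant in the same greedy sense as Lemma~\ref{alg3} (global minimization being NP-Hard by Theorem~\ref{thm:chromatic}). I expect the main obstacle to be the book-keeping in the third paragraph: correctly bounding the number of gates in $U_Q$ as $\mathcal{O}(Mnd^2)$ and the cost of the per-qubit count $count_{q_{U_Q}}$ over a union spanning $\mathcal{O}(Mn)$ physical qubits, since a loose bound on either factor changes the exponent and breaks the target $\mathcal{O}(M^4n^2d^2)$.
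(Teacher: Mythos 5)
Your proposal is correct and follows essentially the same bottom-up route as the paper's proof: bound Algorithm~\ref{alg:construct_distortion}, then Algorithm~\ref{alg:distortion_compatibility_check} at $\mathcal{O}(M^2n^2d^2)$ per call, then multiply by the $\mathcal{O}(M)\times\mathcal{O}(M)$ greedy loops to get $\mathcal{O}(M^4n^2d^2)$. The only differences are cosmetic --- you attribute the dominant $\mathcal{O}(M^2n^2d^2)$ term to scanning the $\mathcal{O}(Mnd^2)$-gate union QIC during the per-qubit distortion counts (bounding the construction itself more tightly as $\mathcal{O}(M^2nd+Mnd^2)$), whereas the paper charges it to the union construction; and you additionally make explicit the greedy-minimality invariant and the re-checking of all members of $S\cup l$ after mutation, which the paper's purely complexity-focused proof leaves implicit.
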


\begin{proof}
    See Appendix~\ref{proof6} .
\end{proof}

\begin{table}[t]
    \centering
    \setlength{\tabcolsep}{3pt}
    \def\thickhline{\noalign{\hrule height1pt}} 
    \caption{Number of circuit executions required when union QIC is created with distortion threshold 0, 1 and 2}
    \label{tab:permutation}

    \begin{tabular}{c||c|c|c|c|c|c|c|c|c|c|c|c|c|c|c|c|c|c|c|c}
    \thickhline
    \bf Threshold & \multicolumn{20}{c}{\bf \# permutation}\\
    \hline    
     & 1 & 2 & 3 & 4 & 5 & 6 & 7 & 8 & 19 & 10 & 11 & 12 & 13 & 14 & 15 & 16 & 17 & 18 & 19 & 20\\
    \hline\hline
    \bf 0 & 27 & 27 & 27 & 27 & 27 & 27 & 27 & 27 & 27 & 27 & 27 & 27 & 27 & 27 & 27 & 27 & 28 & 28 & 27 & 27\\
    \hline
    \bf 1 & 16 & 16 & 16 & 17 & 17 & 17 & 16 & 17 & 16 & 17 & 16 & 16 & 15 & 18 & 17 & 16 & 16 & 17 & 17 & 18\\
    \hline
    \bf 2 & 12 & 7 & 11 & 8 & 7 & 11 & 11 & 8 & 6 & 11 & 10 & 11 & 6 & 7 & 12 & 11 & 10 & 12 & 11 & 15\\
    \thickhline
    \end{tabular}
    
\end{table}

In the case of distortion, the effect of randomization is prominent. 
In Table~\ref{tab:permutation}, we show the number of union QIC sets, which is equal to the number of circuit executions, when the distortion threshold is set to 0, 1 and 2 for 20 random permutations. Note that 0 distortion threshold does not imply non-overlapping layouts; rather it means overlap such that the average number of 2-qubit gates associated with each qubit is maintained.
We note  
that the random permutation provides better outcomes as threshold increases. This further solidifies the requirement of random permutation and a quantum-centric supercomputing approach for finding the minimal set of layouts. In Table~\ref{tab:distortion_QIC}, we show the number of circuit executions required for Overlap QIC with a distortion threshold of 1 compared to the Union QIC with disjoint layouts, basic QIC and JIT. We note a significant reduction in the number of circuit executions for threshold 1, where on an average, we got relative reduction of $79.7\%$ in the number of circuit executions compared to JIT. This makes our proposed method extremely lightweight for near-term quantum computing.

\begin{table}[t]
\centering
\caption{Resource Comparison of JIT, Basic QIC, Union QIC and Overlap QIC with a distortion threshold 1, for QAOA Circuits of varying widths on FakeKolkataV2}
\label{tab:distortion_QIC}
\begin{tabular}{lrrrrr}
\toprule
\bf Circuit executions & \textbf{6 qubits} & \textbf{10 qubits} & \textbf{14 qubits} & \textbf{18 qubits} & \textbf{20 qubits}\\
\midrule
\textbf{JIT} & 132 & 132 & 132 & 132 & 132\\
\textbf{Basic QIC} & 104 & 156 & 128 & 100 & 88 \\
\textbf{Union QIC} & 54 & 120 & 128 & 100 & 88\\
\textbf{Overlap QIC} \textit{(Threshold 1)} & 15 & 36 & 33 & 26 & 24\\
\bottomrule
\end{tabular}
\end{table}

\section{Comparison with Mapomatic}
\label{sec:result}
In the previous sections, we showed that our proposed method can reduce the number of circuit executions by $79.7\%$ on average when compared with JIT. However, Mapomatic has no quantum overhead since it fetches the hardware noise information from the backend calibration data.

\begin{figure}[t]
    \centering        \includegraphics[width=0.5\columnwidth]{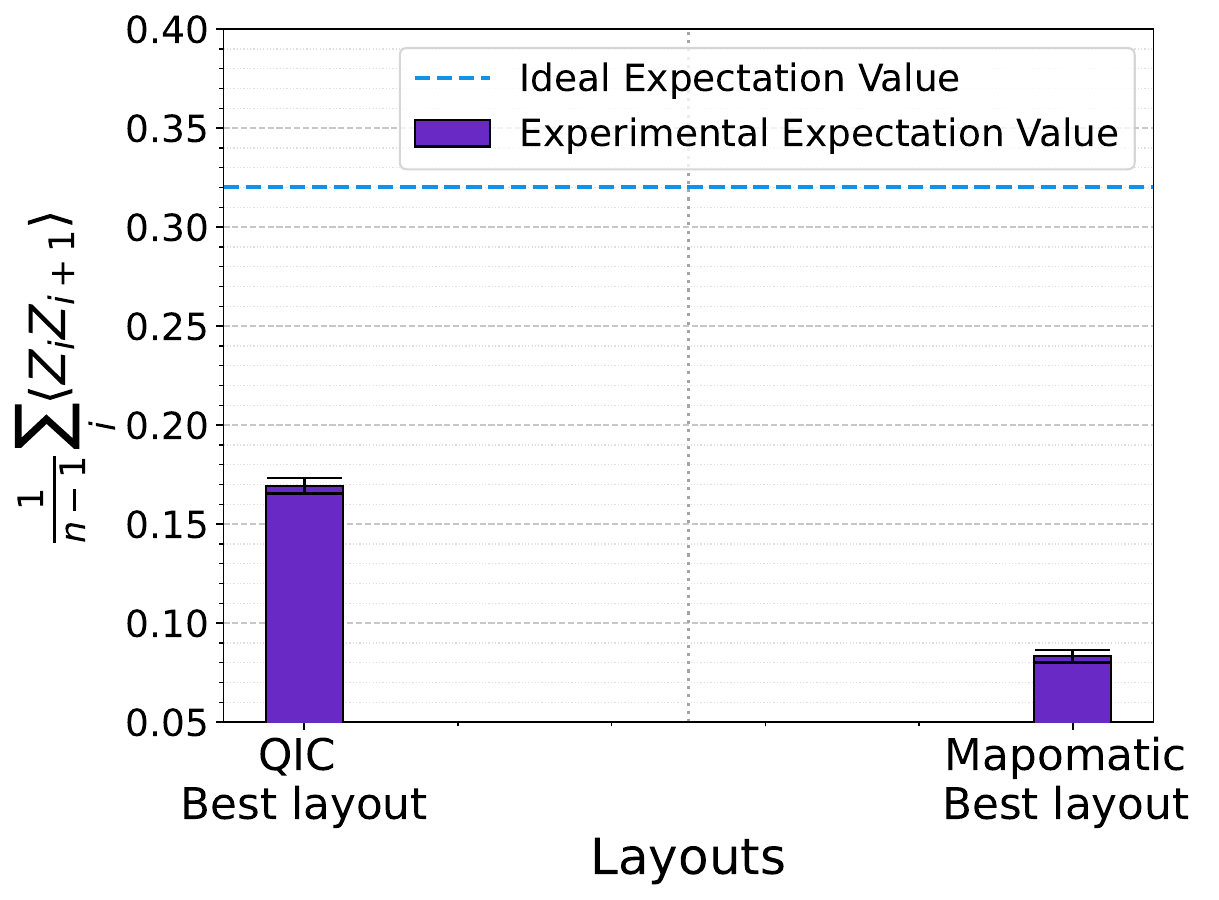} 
    \caption{Expectation values obtained from executing the circuit on the best layout from QIC and Mapomatic on a 127-qubit IBM Quantum hardware. QIC selects a better layout than Mapomatic, seen from the better expectation value obtained.}
    \label{fig: QIC_union_used_in_RH}
\end{figure}

\begin{figure*}[t]
    \centering
    \begin{subfigure}[b]{0.48\textwidth}
        \includegraphics[width=\textwidth]{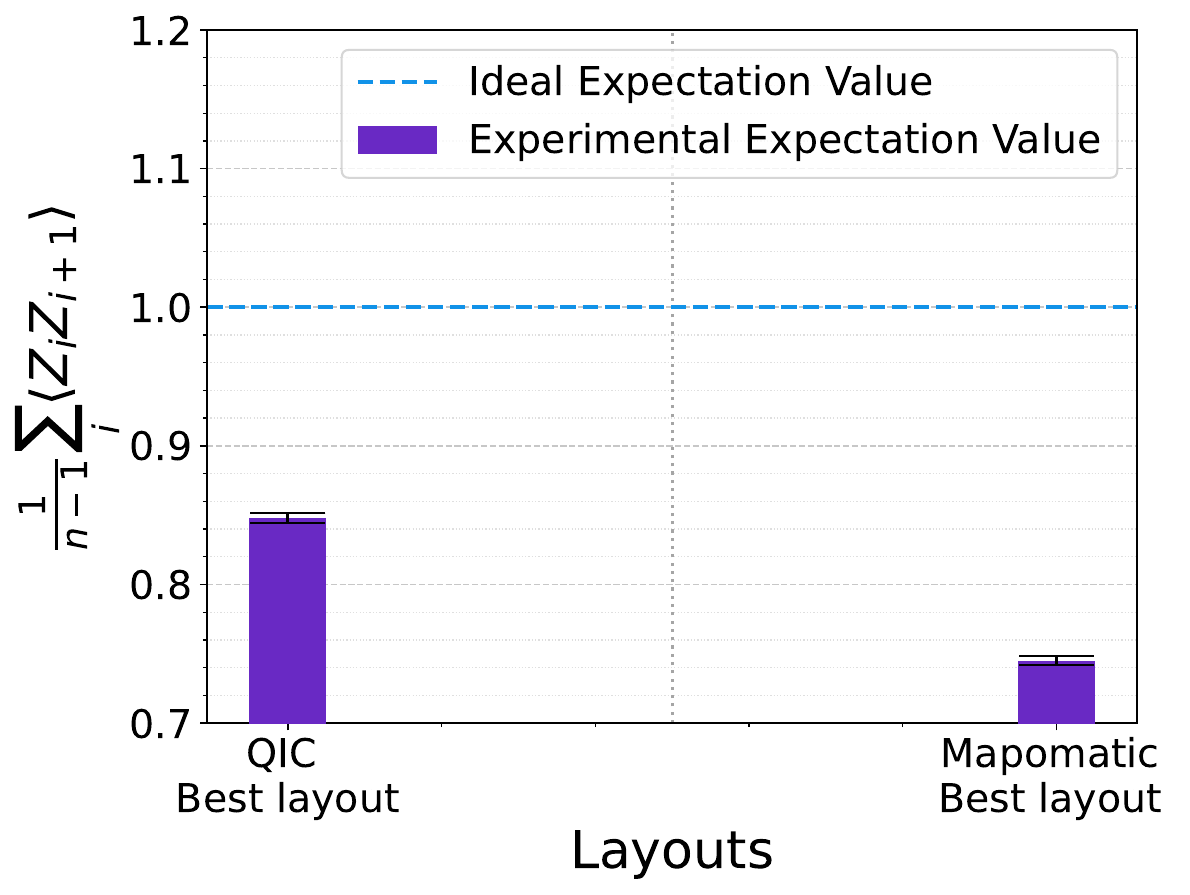}
        \caption{\# Qubits = 6, depth = 8}
        \label{fig:sub1}
    \end{subfigure}~
    \begin{subfigure}[b]{0.48\textwidth}
        \includegraphics[width=\textwidth]{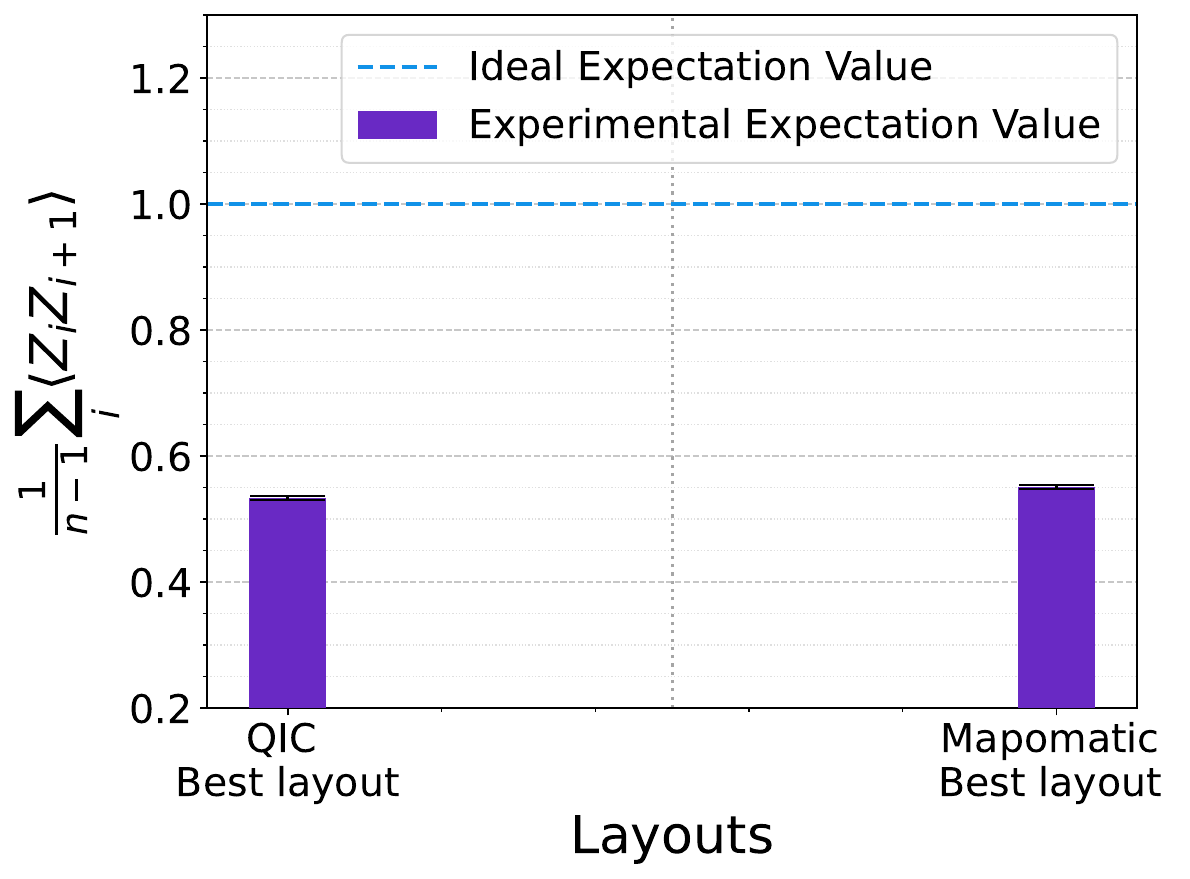}
        \caption{\# Qubits = 12, depth = 8}
        \label{fig:sub2}
    \end{subfigure}\\~\\
    \begin{subfigure}[b]{0.48\textwidth}
        \includegraphics[width=\textwidth]{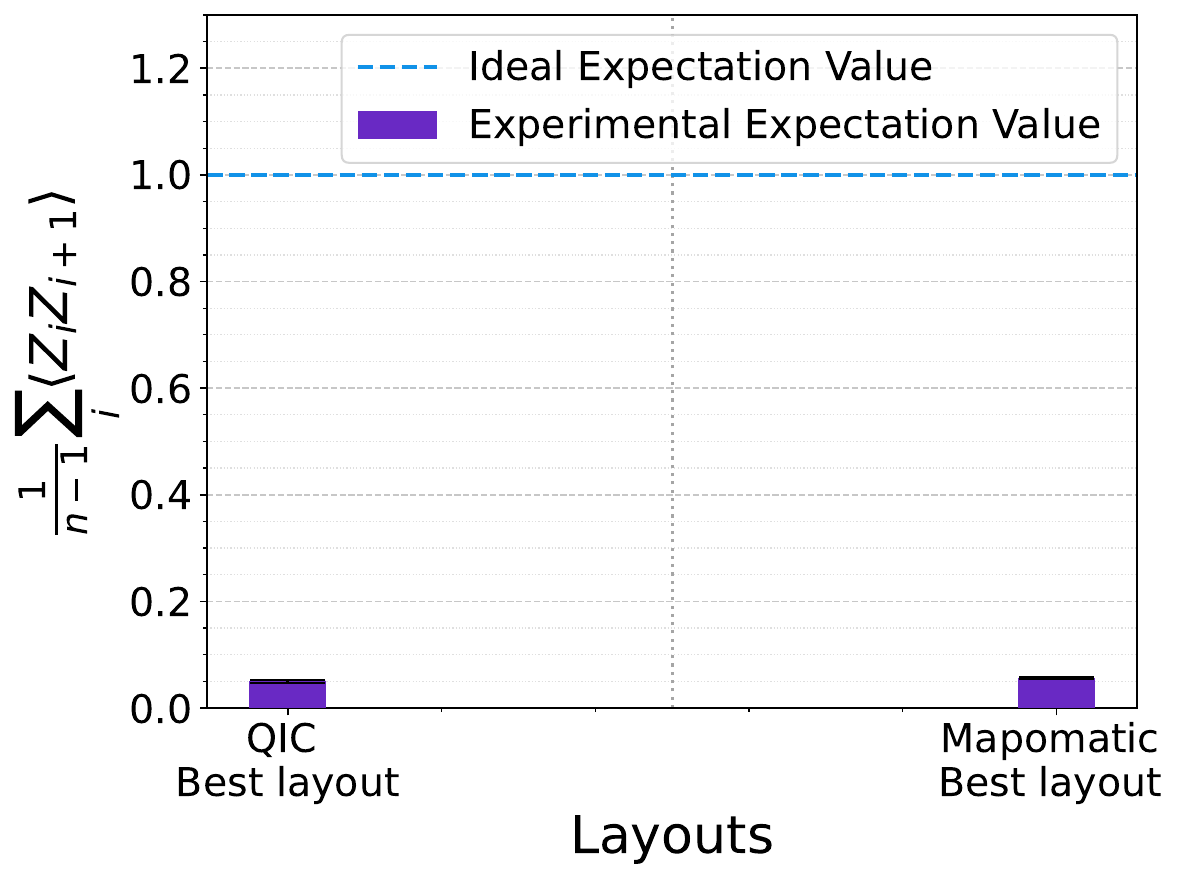}
        \caption{\# Qubits = 16, depth = 8}
        \label{fig:sub3}
    \end{subfigure}
    \caption{Qubit scaling behavior of the QIC method in Identity circuit, obtained using unitary overlap of QAOA circuit. As the qubit count increases, the  best layouts from Mapomatic and QIC became identical.
    }
    
    \label{fig:qubit_scaling_qic}
\end{figure*}

We use a 127-qubit real hardware, ibm$\_$sherbrooke, where we use a distortion threshold of $1$ so that the deviation in the obtained QIC scores can also be minimized, and find the least noisy layout as per QIC scores. We also obtain the best layout using Mapomatic and compared the results of the 2 layouts. For a 6-qubit $p=2$ QAOA circuit, $4$ union QICs had been executed in $12$~seconds for $28$ isomorphic layouts. In Fig.~\ref{fig: QIC_union_used_in_RH}, we show that for a 6-qubit $p=2$ QAOA circuit, the best layout selected by QIC with a distortion threshold of $1$ gives a better expectation value as compared to the best layout selected by Mapomatic. Here, bootstrap sampling~\cite{10.1214/aos/1176344552} is used to get 10 results with statistically similar outcomes, and we plot the standard deviation along with the mean among the 10 results. This experiment was performed just after 47 minutes of calibration. The worse performance of Mapomatic shows the dynamic nature of noise, and the necessity for updated backend noise information for layout selection in real time. Note that the hardware experiments were performed without any error mitigation, and therefore, the improvement obtained in the expectation value is solely due to better layout selection.

In Fig.~\ref{fig:qubit_scaling_qic}, we see similar results for a mirror circuit obtained from $p = 2$ QAOA circuit followed by its inverse, when increasing the number of qubits from $6$ to $16$. This time, four similar results were generated from Bootstrap sampling for each experiment. The ideal expectation value $\frac{1}{n-1}\sum\limits_{i=1}^{n-1}\langle Z_i Z_{i+1} \rangle$ of the mirror circuit is $+1$, where $n$ is the number of qubits. Hence, the mirror circuit is useful to see the effect of noise from the deviation from ideal expected outcome. We observe that the improvement obtained from QIC diminishes as the size of the circuit increases. This is expected (and also reported in previous studies on JIT \cite{ibm_qiskit_qubit_selection}) since (i) the expectation values become worse without error mitigation, and hence the improvement from layout selection alone becomes minimal, and (ii) the choice of isomorphic layouts become lesser with increasing size of the circuit. Nevertheless, Fig.~\ref{fig: QIC_union_used_in_RH}, where the circuit is sufficiently small to clearly indicate the improvement by better layout selection, shows the utility of QIC as a light-weight technique to select better layouts for circuit execution.

\section{Conclusion}
\label{sec:conclusion}

In this article, we propose a lightweight technique for layout selection using QIC which, when executed on a layout, provides an indication of its noise profile. We show that the QIC method is targeted to each specific user circuit and layout, requires fewer circuit executions than JIT, and is comparable to Mapomatic in a static noise profile. Since the basic method requires computing a circuit for each isomorphic layout, which can be large, we propose the Union QIC method to unite QICs of multiple disjoint layouts and extend it to the Overlap QIC approach that allows some overlaps among the circuits while stayin within a distortion threshold to maintain the reliability of layout scoring. Using this, we obtain a reduction in the number of circuit execution by $79.7\%$ when compared with JIT. QIC is also able to select better layouts than Mapomatic on a 127-qubit IBM quantum hardware where the noise model is dynamic. The significant reduction in the number of circuit execution, combined with the improved quality of layout selection, make our proposed method viable for near term quantum computing.


\bibliographystyle{IEEEtran}
\bibliography{arxiv}


\clearpage
\begin{appendices}

\section{Proof of Lemma~\ref{lemma:qic_construction_time}}
\label{proof1}

\textbf{Lemma}~
\textit{Given a circuit of depth $d$ with $n$ qubits, Algorithm~\ref{alg:qic} constructs the corresponding QIC in $\mathcal{O}(n.d)$ time.}

\begin{proof}
For a given quantum circuit with $n$ qubits and depth $d$, Algorithm~\ref{alg:qic} scans through the circuit to find the indices of the 2-qubit gates, and the number of occurring for each index pair. This requires $\mathcal{O}(n.d)$ time since the circuit of depth $d$ with $n$ qubits can have at most $\mathcal{O}(n.d)$ number of quantum gates. After obtaining the indices of the 2-qubit operations, Algorithm~\ref{alg:qic} scans through this set once more to create a minimum ratio for the occurrence of gates for each index pair. Since the algorithm requires two scans of the entire circuit, therefore, the construction of the QIC can also be done with $\mathcal{O}(n.d)$ time.
\end{proof}

\section{Proof of Lemma~\ref{lemma:qic}}
\label{proof2}

\textbf{Lemma}~
    \textit{The ideal noise-free outcome of an $n$-qubit quantum circuit with a network of CNOT gates between arbitrary pairs of qubits sandwiched between two layers of Hadamard gates acting on all qubits is $|0\rangle ^{\otimes n}$.}

\begin{proof}
The circuit starts in the state $|0\rangle ^{\otimes n}$. The initial layer of Hadamard creates the uniform superposition state $\frac{1}{\sqrt{2^n}} \sum_{x=0}^{2^n-1} \ket{x}$. Now, consider a CNOT gate between any two pairs of qubits $j$ and $k$. This CNOT changes the state $\ket{x_0 x_1 ... x_j ... x_k ... x_{n-1}}$ to $\ket{x_0 x_1 ... x_j ... x_j \oplus x_k ... x_{n-1}}$. Similarly, since the state is a uniform superposition of all basis states, it also contains the state $\ket{x_0 x_1 ... x_j ... x_j \oplus x_k ... x_{n-1}}$ which gets transformed to $\ket{x_0 x_1 ... x_j ... x_k ... x_{n-1}}$. Thus effectively, there is no change in the uniform superposition. Hence, any network of CNOT gates retains the uniform superposition. Finally, applying a final layer of Hadamard gates at the end changes the equal superposition state to $|0\rangle ^{\otimes n}$.
\end{proof}

\section{Proof of Theorem~\ref{thm:chromatic}}
\label{proof3}
\textbf{Theorem}~
    \textit{Finding the minimum number of sets $S = \{S_1, S_2, ..., S_k\}$ such that $S_i \cap S_j = \phi$, $\forall$ sets $S_i \neq S_j$, and $\forall$ layouts $l^p, l^r \in S_i$, $\forall$ $i$, $l^p \cap l^r = \phi$ is NP-Hard.}

\begin{proof}
Assume that finding the minimum number of sets of isomorphic layouts, such that each set contains non-overlapping layouts, can be solved efficiently, i.e., there exists a polynomial-time algorithm for this problem.  The
finding of the chromatic number of a general graph is known
to be NP-Hard \cite{Garey1979} 
which, as we show in the reduction below, can be transformed into finding the minimum number of sets of layouts where each set contains disjoint layouts in polynomial time.

\textbf{Reduction:}
Suppose we have a general graph with $N$ number of nodes and we want to find the chromatic number of the same.
We construct $N$ empty layouts, each assigned to a unique node in the given graph. If two nodes are connected by an edge, a common physical qubit will be assigned to their corresponding layouts as we describe below.

We start with the first node and identify all nodes it shares an edge with. Then, we assign physical qubit number 1 to each of these nodes, including the first node itself. This takes $\mathcal{O}(N-1)$ time. Remove the first node and its associated edges from the graph. Then, identify all nodes connected to the second node. Assign physical qubit number 2 to each of these nodes, including the second node itself. This takes $\mathcal{O}(N-2)$ time. Repeat this process until only one node remains in the graph. So, the total time taken will be $\mathcal{O}(N^2)$. Finally, make all layout sizes equal by assigning new qubit numbers — making sure that no new qubit number is repeated. To do this, we first check the lengths of all layouts in $\mathcal{O}(N^2)$ time, then identify the maximum length in $\mathcal{O}(N)$ time, and finally assign the remaining qubits in $\mathcal{O}(N^2)$ time. Overall, the process takes $\mathcal{O}(N^2)$ time. So, we’ve reduced the known NP-hard problem to our problem in polynomial time, assuming we have all-to-all qubit connectivity to make sure our layouts are isomorphic. In this reduction, any two isomorphic layouts share a common physical qubit if and only if their corresponding nodes in the graph are connected by an edge.

This reduction implies that if we had an efficient algorithm for our problem, we could use it as a subroutine to solve the chromatic number problem efficiently by assigning distinct colors to each set, which would mean that finding the chromatic number of a general graph is easy (not NP-hard). This leads to a contradiction, implying that such an efficient algorithm cannot exist.
Equivalently, reduction implies that since chromatic number problem is NP hard, our problem of interest must also be NP hard.
Hence, finding the minimum number of sets of isomorphic layouts, such that each set contains non-overlapping layouts, is NP-hard.
\end{proof}

\section{Proof of Lemma~\ref{alg2}}
\label{proof4}
\textbf{Lemma}~
    \textit{Algorithm~\ref{alg:disjointness} checks the compatibility of a layout $l$ with a set of mutually disjoint layouts $S = \{l^1, l^2, \hdots, l^m\}$ in $\mathcal{O}(n^2.m)$ time where $n$ is the number of qubits in each layout.}

\begin{proof}
For a given set of mutually disjoint layouts
$S = \{l^1, l^2, \hdots, l^m\}$ and a different layout $l$, where $n$ is the number of qubits in every layout, Algorithm~\ref{alg:disjointness} finds out whether the layout $l$ is disjoint with all layouts $\in S$ in $\mathcal{O}(n^2m)$ time since it checks whether any layout $\in S$ has overlap $> 0$ with the layout $l$ or not. Total number of overlap calculations required is $\mathcal{O}(m)$ since total number of layouts $\in S$ are $m$.
In one calculation of overlap between layouts $l$ and other layout $l^s \in S$ , $\mathcal{O}(n^2)$ time is required since every qubit $\in l$ needs to be compared with all qubits $\in l^s $.  
\end{proof}

\section{Proof of Lemma~\ref{alg3}}
\label{proof5}

\textbf{Lemma}~
    \textit{Algorithm~\ref{alg:greedy} partitions the list of all isomorphic layouts into a minimal number of sets $U = \{S_1, S_2, \hdots, S_k\}$ in $\mathcal{O}(M^3n^2)$ time where $M$ is the total number of isomorphic layouts and $n$ is the number of qubits in each layout.}

\begin{proof}
For $n$ qubits in each layout, algorithm~\ref{alg:greedy} partitions the list of all $M$ isomorphic layouts into a minimal number of sets $U = \{S_1, S_2, \hdots, S_k\}$ in $\mathcal{O}(M^3n^2)$ since it takes $\mathcal{O}(M)$ iterations for going through all layouts and assign them to one of the existing compatible set or to new set. Total number of sets will be $\mathcal{O}(M)$ and inside one set, number of layouts will also be $\mathcal{O}(M)$. From Lemma~\ref{alg2}, we know that each compatibility test requires $\mathcal{O}(Mn^2)$ time.
\end{proof}

\section{Proof of Lemma~\ref{alg6}}
\label{proof6}

\textbf{Lemma}~
\textit{Algorithm~\ref{alg: greedy_algorithm_to_Reduce_qic_Execs_distortion} partitions the list of all $M$ isomorphic layouts into the minimal number of sets $U = \{S_1, S_2, \hdots, S_k\}$ in $\mathcal{O}(M^4n^2d^2)$ such that the layouts in each set may overlap but maintaining the distortion threshold provided where $n$ is the number of qubits in a layout, and $d$ is the depth of a single QIC.}

\begin{proof}
Algorithm~\ref{alg: greedy_algorithm_to_Reduce_qic_Execs_distortion} uses algorithm~\ref{alg:distortion_compatibility_check} as a subroutine and algorithm~\ref{alg:distortion_compatibility_check} uses algorithm~\ref{alg:construct_distortion} as a subroutine.\par Algorithm~\ref{alg:construct_distortion} gets  $\mathcal{O}(M)$ layouts $\in S$ and $\mathcal{O}(M)$ corresponding $QICs$ each with depth $d$ as input and returns the corresponding Union QIC $U_Q$ in $\mathcal{O}(M^2n^2d^2)$ time since it takes $\mathcal{O}(M^2n^2)$ time to get the union of all qubits in the $QICs$ and $\mathcal{O}(M^2n^2d^2)$ time to get the union of all qubit pairs in the $QICs$ since total number of qubit pairs in a QIC can be at most $\mathcal{O}(nd)$. It takes $\mathcal{O}(nM)$ time to apply initial and last hadamard layer in $U_Q$. It iterates through $\mathcal{O}(ndM)$ qubit pairs for applying corresponding ceil average number of CNOT gates in $U_Q$ but ceil average number of CNOT gates for each qubit pair is calculated in $\mathcal{O}(M)$ time. Therefore, applying required number of CNOT gates in $U_Q$ can be done in $\mathcal{O}(ndM^2)$ time. Hence, overall, algorithm~\ref{alg:construct_distortion} takes $\mathcal{O}(M^2n^2d^2)$ time.\par 
Algorithm~\ref{alg:distortion_compatibility_check} finds out the distortion compatibility of a layout $l$ with a set of $\mathcal{O}(M)$ number of layouts $\in S$ in $\mathcal{O}(M^2n^2d^2)$ time. It first uses the algorithm~\ref{alg:construct_distortion} to construct the union QIC $U_Q$ in $\mathcal{O}(M^2n^2d^2)$ time. Then, it computes the distortion compatibility of the $ M+1$ layouts in $\mathcal{O}(Mn^2d)$ time where $\mathcal{O}(Mnd)$ time is required to get distortion corresponding to one qubit. Hence, overall algorithm~\ref{alg:distortion_compatibility_check} takes $\mathcal{O}(M^2n^2d^2)$ time.\par
For $n$ qubits in each layout, algorithm~\ref{alg: greedy_algorithm_to_Reduce_qic_Execs_distortion} partitions the list of all $M$ isomorphic layouts into a minimal number of sets $U = \{S_1, S_2, \hdots, S_k\}$ in $\mathcal{O}(M^4n^2d^2)$ since it takes $\mathcal{O}(M)$ iterations for going through all layouts and assign them to one of the existing $\mathcal{O}(M)$ distortion compatible sets or to new set where distortion compatibility is check by algorithm~\ref{alg:distortion_compatibility_check} in $\mathcal{O}(M^2n^2d^2)$ time.
\end{proof}

\end{appendices}
\end{document}